\newtheorem{theorem}{Theorem}
\newtheorem{definition}{Definition}
\newtheorem{lemma}{Lemma}
\newcommand{\mkv}{-\!\!\!\!\minuso\!\!\!\!-}
\DeclareMathOperator*{\dprime}{\prime \prime}
\DeclareMathOperator*{\trprime}{\prime\prime\prime}
\title{Multi-Kernel Polar Codes: Proof of Polarization  \\
 and Error Exponents}
\author{\IEEEauthorblockN{Meryem Benammar, Valerio Bioglio, Fr\'ed\'eric Gabry\thanks{These results were derived when F. Gabry was still with the Mathematical and Algorithmic Sciences Lab of Huawei.}, Ingmar Land}
\IEEEauthorblockA{Mathematical and Algorithmic Sciences Lab\\ Paris Research Center, Huawei Technologies France SASU\\
Email: $\{$meryem.benammar,valerio.bioglio,frederic.gabry,ingmar.land$\}$@huawei.com}\vspace{-2mm}} 
\begin{document}
\maketitle 
\begin{abstract}
In this paper, we investigate a novel family of polar codes based on multi-kernel constructions, proving that this construction actually polarizes. 
To this end, we derive a new and more general proof of polarization, which gives sufficient conditions for kernels to polarize. 
Finally, we derive the convergence rate of the multi-kernel construction and relate it to the convergence rate of each of the constituent kernels. 
\end{abstract}

\section{Introduction}
Channel polarization is a novel technique to create capacity-achieving codes over various channels \cite{polar}. In its original construction, a polar code is generated by a sub-matrix of the transformation matrix $T_2^{\otimes n}$ with the binary kernel defined by $T_2 = \left(\begin{array}{cc}
 1 & 0 \\ 
 1 & 1 \\ 
 \end{array} \right)$. 
While the polarization of the Kronecker powers of the $T_2$ kernel is presented in \cite{polar}, the proof has been generalized in \cite{exp_urbanke} for the Kronecker power of larger binary kernels. 
Hereafter, various kernels have been proposed, along with their convergence rates \cite{kernel_presman}. 

Recently, a novel family of polar codes, where binary kernels of different sizes are mixed, has been proposed in \cite{mk_arxiv}. 
These codes, coined as \textit{multi-kernel polar codes}, make it possible to construct polar codes of any block lengths, not limited to powers of integers as when using a single kernel. 
In \cite{mk_arxiv}, authors conjecture that the proposed mixed construction polarizes; this conjecture is confirmed by the density evolution algorithm used to calculate the bit reliabilities \cite{DE_mori}, however, a rigorous convergence proof is missing.  

In this paper, we fill the gap left in \cite{mk_arxiv} by presenting a proof of the polarization of multi-kernel polar codes. 
To this end, we describe the polarization of any kernel through an inequality related to the reliability transitions in every step of the polarization. 
This inequality is different from the ones presented in other proofs \cite{polar,exp_urbanke}; however, we will show that this inequality holds for Arikan's original kernel, along with other selected kernels. 
Besides, we derive the convergence rate of the obtained multi-kernel polar code based on convergence rate of each of the constituent kernels. 

This paper is organized as follows. 
Section II introduces the information theoretic model and reviews the code construction of multi-kernel polar codes. Section III presents the main result of this paper which consists in the proof of polarization, while Section IV derives the convergence rates (or error exponents) of the resulting multi-kernel polar code. 
 
\section{System model and code definition}
In this section, we introduce the fundamental definitions related to polar coding and the underlying information theoretic model. 
Moreover, we briefly present the multi-kernel construction of polar codes. 

\subsection{Channel model}
Let $ \mathcal{W} :  \mathcal{X} \rightarrow \mathcal{Y} $  be a discrete input channel defined by its associated probability mass function (pmf) $W(y|x)$, and let $\mathcal{W}^{(N)} : \mathcal{X}^N  \rightarrow \mathcal{Y}^N $ be its $N$-th memoryless extension with associated pmf $W^{(N)}(y_1^N|x_1^N)$, 
\begin{equation}
  W^{(N)}(y_1^N|x_1^N) =  \prod_{i=1}^N  W(y_i|x_i) . 
\end{equation}
 
Let $U_1^N \triangleq (U_1, \dots, U_N)$ be $N$ auxiliary random variables satisfying the Markov chain $(U_1, \dots, U_N) \mkv X_i \mkv Y_i $ for all $i \in [1:N]$, where $[1:N]$ represents the set of integers from $1$ to $N$. 
In the following, the input alphabet $\mathcal{X}$ and the respective auxiliary alphabets $\mathcal{U}_j$ are all binary, i.e., for all $i \in [1:N]$, $ \mathcal{X} = \mathcal{U}_j = \{0,1\} $. 
The auxiliary variables, or \textit{bit components}, $U_i$ are all pairwise independent $Bern(\frac{1}{2})$ variables.  
  
\subsection{Channel polarization}
A polar code of length $N$ is a linear block code which maps the bits $u_1^N=(u_1, \dots, u_N)$ into the channel input array $x_1^N=(x_1, \dots , x_N) $ through a linear invertible mapping, i.e., 
\begin{equation}
 x_1^n = u_1^n \cdot G_N\ , 
\end{equation}
where $G_N$ is depicted as the \textit{transformation matrix} of the polar code. 
Since the bits $(U_1,\dots, U_N)$ are independent $Bern(\frac{1}{2})$ distributed and $G_N$ is invertible, then the channel inputs $(X_1, \dots, X_N)$ are also independent $Bern(\frac{1}{2})$ distributed. 
This yields to the \textit{information conservation} principle 
\begin{equation}
 I(U_1^N; Y_1^N) = I(X_1^N; Y_1^N) = N \cdot I(X;Y) ,  \label{eq-InformationConservation}
\end{equation}
which stems from that $\mathcal{W}^{(N)}$ is memoryless. 
In the following, since the input distribution $P_X$ is fixed, we will use $I(W) \triangleq I(X;Y)$ to denote the dependence of $I(X;Y) $ only on the channel $W$.

To introduce the \textit{polarization principle}, let us first use the independence of the bits $U_i$ to write the following
\begin{IEEEeqnarray}{rCl} 
I(U_1^N; Y_1^N) &=& \sum_{i=1}^N I(U_i; Y_1^N | U_1^{i-1}) \\
 &=&  \sum_{i=1}^N I(U_i; Y_1^N, U_1^{i-1}). \label{eq_VirtualChannels}
\end{IEEEeqnarray}
Let us then define the channel $\mathcal{W}^{(N)}_i$, with output $Z_i \triangleq (Y_1^N , U_1^{i-1})$ and associated pmf 
\begin{equation}\label{eq-new-channel}
  \mathcal{W}^{(N)}_i (z_i | u_i) = \mathcal{W}^{(N)}_i(y_1^N , u_1^{i-1}|u_i). 
\end{equation}
To polarize the channel $\mathcal{W}^{(N)}$ means to create $N$ virtual channels $\mathcal{W}^{(N)}_i$, each being either a \textit{degraded} version of $\mathcal{W}$ or an \textit{enhanced} version of it. 
By a proper choice of the transformation matrix $G_N$, as the one suggested by Arikan \cite{polar}, it can be shown that, as the code length $N$ goes to infinity, the resulting virtual channels $\mathcal{W}^{(N)}_i$ are either perfectly reliable channels or totally noisy channels, i.e., 
\begin{equation}
 \forall i \in [1:N] \quad I(U_i; Y_1^N , U_1^{i-1})  \rightarrow  0  \  \textrm{or} \  1. 
\end{equation}
Arikan showed in \cite{polar} that the fraction of perfectly reliable channels among all, i.e., the fraction of bits that can be transmitted reliably, is given by the mutual information $I(X;Y)$. 
  
\subsection{Multi-kernel code construction}
   
The transformation matrix $G_N$ of a polar code of length $N$ is defined by the $n$-fold Kronecker product of the binary \textit{kernel} $T_2 = \left(\begin{array}{cc}
    1 & 0 \\ 
    1 & 1 
\end{array} \right)$, namely $G_N = T_2^{\otimes n}$, where $N = 2^n$. 
Note that the structure of the transformation matrix renders it invertible, and that the admissible blocklengths $N$ are all powers of $2$, which might be an impediment for practical applications with arbitrary blocklengths. 
However, in \cite{exp_urbanke} authors prove that the kernel $T_2$ can be replaced by any polarizing kernel $T_l$ with dimension $l\times l$, leading to codes of blocklengths of the form $N =l^n$.
An example of such kernels, which we will resort to in the document, is the $T_3$ kernel given by: 
\begin{equation}\label{eq-T3Tkernels}
T_3 \triangleq \left(\begin{array}{ccc}
 1 & 1 & 1 \\ 
 1 & 0 & 1 \\ 
 0 & 1 & 1
 \end{array} \right) .
\end{equation} 

The multi-kernel polar code construction is introduced in \cite{mk_arxiv}, in which multiple binary kernels are used in the construction of the code. 
Consider to this end a collection of kernels $T_{l_1} , \dots, T_{l_m}$ where, for $j \in [1: m]$, $T_{l_j}$ is a binary matrix of dimension $l_j \times l_j$. 
A multi-kernel transformation matrix is constructed as the Kronecker product of these kernels,  
\begin{equation}
G_{N} = T_{l_1} \otimes T_{l_2}  \otimes \dots \otimes T_{l_m}. \label{eq_GeneratorMatrix}
\end{equation}
Note here that the size of the resulting code is $ {N} = \prod_{j=1}^m l_j$.
An $(N,K)$ multi-kernel polar code is defined by the transformation matrix $G_N$ in \eqref{eq_GeneratorMatrix}, and the information set $\mathcal{I}$ of size $||\mathcal{I}|| = K$, or conversely by the frozen set $\mathcal{F} = [1:N]\setminus \mathcal{I}$. 
For the encoding, each frozen bit is set to zero, i.e., $u_i = 0$ for $i\in \mathcal{F}$, while information is stored in the remaining bits, whose indices constitute the information set $\mathcal{I}$. 
Then, the channel input $x_1^N$ is obtained by $x_1^N  = u_1^N \cdot G_N$. 
In \cite{mk_arxiv}, authors conjecture that the Kronecker product of polarizing kernels result in a polarizing transformation matrix, and they calculate the reliabilities of the bits through density evolution \cite{DE_mori}. 
The information set $\mathcal{I}$ is then constituted by the $K$ bit positions with the highest reliability. 
In this paper, we present a prove of this conjecture, confirming the goodness of the multi-kernel construction in \cite{mk_arxiv}. 
 
Decoding of multi-kernel polar codes is performed similarly to Arikan's polar code, using successive cancellation decoding. 
At each step, a bit $u_i$ is decoded from the channel outputs $y_1^n$ using the previously made hard decisions on the bits $u_1^{i-1}$. 
In practice, the decoding is performed on the Tanner graph of the code, where each block of the decoder consists in the basic decoding operations of the kernel $T_{l_j}$ as explained in \cite{mk_arxiv}.
   
\section{Polarization of multi-kernel polar codes}
In this section, we prove the main result of the paper: the polarization of multi-kernel polar codes. 
The approach we adopt here is somewhat similar to the one proposed by Arikan in \cite{polar}, but differing in the last steps, where we prove that polarization is highly kernel dependent.  

 \subsection{Definitions} 
In the following, the sub-channels $W^{(N)}_i$ where $i \in [1:N]$ are defined as
\begin{equation}
 W^{(N)}_i \triangleq W_{b_1, \dots, b_m} 
\end{equation}
where the index $b_j$, for $j \in [1:m]$, takes values in $[1:l_j]$, constituting the \textit{mixed radix} decomposition of $i$ in the base $(l_{1},\dots, l_m)$. Due to the Kronecker construction of the transformation matrix $G_{N}$ in \eqref{eq_GeneratorMatrix}, the channel polarization occurs iteratively in that, at each step $m$, each of the previous channels $W_{b_1, \dots, b_{m-1}} $ is polarized into $l_m$ new channels $W_{b_1, \dots, b_{m-1}, 1}, \dots, W_{b_1, \dots, b_{m-1}, b_m}, \dots, W_{b_1, \dots, b_{m-1}, l_m}$ as shown in Figure \ref{Fig-Polarization}. 
\begin{figure}
\centering
\includegraphics[scale=0.55]{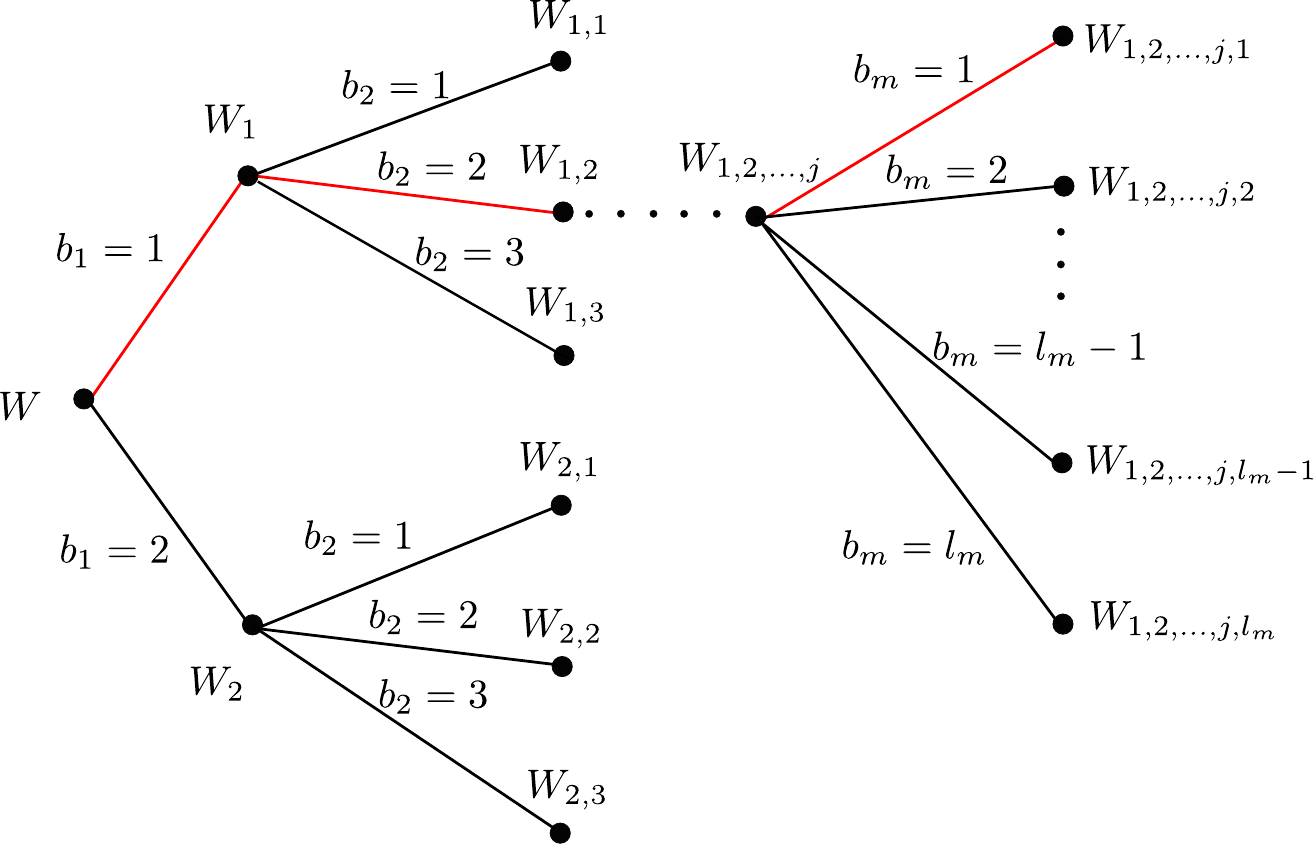}
\caption{Polarization tree of a kernel $T_2 \otimes T_3 \otimes ... \otimes T_{l_m} $}
\label{Fig-Polarization}
\end{figure}

Let us assume that the transitions on the polarization tree occur uniformly at random, i.e. the indices $(B_1,\dots ,B_m)$ are uniformly distributed, i.e., $\forall b_j \in [1:l_j], \  P_{B_j}(b_j) = \frac{1}{l_j}$. 
 The mutual information $I(W_{B_1,\dots, B_m})$ is then a random variable depending on the random variables $B_1, \dots, B_m$, which we will denote by  
\begin{equation}
I_m \triangleq I(W_{B_1, \dots, B_m})   
\end{equation}
where $I_0 \triangleq I(W)$ is a deterministic variable indicating the channel capacity. 
Finally, we give a formal definition of channel polarization for multi-kernel polar codes. 
\begin{definition}
\label{Def-pol}
A multi-kernel polar code is polarizing if 
\begin{equation*}
\forall \epsilon >0 ,  \lim_{m\to\infty} \mathds{P}(I_m \geq 1- \epsilon ) = 1 - \lim_{m\to\infty} \mathds{P}(I_m \leq\epsilon ) =  I(W). 
\end{equation*}
\end{definition}

\subsection{Proof of polarization}
Similarly to the case of $T_2$ kernels investigated by Arikan, the proof of polarization of multi-kernel polar codes is two-fold. 
First, we prove that the random process $\{I_m\}_m$ is converging to a random variable $I_\infty $ in probability. 
Then, we show that, if the kernels are selected properly, the distribution of $I_\infty $ is a Bernoulli distribution with probability $I(W)$, and $I_m$ follows the Definition \ref{Def-pol} . 

\begin{lemma}
\label{lem-mart}
The sequence $\{I_m\}_m$ is a bounded martingale with respect to $B_1, \dots, B_m$, and thus converges in probability to a variable $I_\infty$ whose expected value verifies $ \mathds{E}(I_\infty) = I_0 . $ 
\end{lemma}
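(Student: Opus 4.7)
The plan is to verify the two defining properties of a bounded martingale for $\{I_m\}_m$ and then invoke Doob's martingale convergence theorem.

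First I would establish boundedness. Since each virtual channel $W_{b_1,\dots,b_m}$ has a binary input $U$ uniformly distributed on $\{0,1\}$, the random variable $I_m = I(W_{B_1,\dots,B_m})$ takes values in $[0,1]$ deterministically, so $\{I_m\}_m$ is uniformly bounded.

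The core step is the martingale property $\mathbb{E}[I_{m+1} \mid B_1,\dots,B_m] = I_m$. Fix a realization $(b_1,\dots,b_m)$ corresponding to a parent channel $W_{b_1,\dots,b_m}$. At step $m+1$, this channel is split by the kernel $T_{l_{m+1}}$ into the $l_{m+1}$ children $W_{b_1,\dots,b_m,1},\dots,W_{b_1,\dots,b_m,l_{m+1}}$. Because $T_{l_{m+1}}$ is invertible and the inputs of the parent channel are i.i.d.\ $\mathrm{Bern}(1/2)$, the information conservation principle \eqref{eq-InformationConservation} applied locally to this one polarization step gives
\begin{equation*}
\sum_{b=1}^{l_{m+1}} I\!\left(W_{b_1,\dots,b_m,b}\right) = l_{m+1}\cdot I\!\left(W_{b_1,\dots,b_m}\right).
\end{equation*}
Dividing by $l_{m+1}$ and noting that $B_{m+1}$ is independent of $(B_1,\dots,B_m)$ and uniform on $[1:l_{m+1}]$ yields
\begin{equation*}
\mathbb{E}[I_{m+1} \mid B_1,\dots,B_m] = \frac{1}{l_{m+1}} \sum_{b=1}^{l_{m+1}} I\!\left(W_{B_1,\dots,B_m,b}\right) = I_m,
\end{equation*}
which is the martingale property. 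This, I expect, is the only non-routine step: one must justify the local conservation identity for each kernel $T_{l_j}$, which is where the invertibility of the kernels and the uniform-Bernoulli distribution of the inputs both enter.

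Having a bounded martingale, Doob's martingale convergence theorem yields almost sure (hence in-probability) convergence to some limit $I_\infty$ taking values in $[0,1]$. Finally, iterating the martingale identity gives $\mathbb{E}[I_m] = I_0$ for every $m$, and the dominated convergence theorem (applicable since $|I_m|\leq 1$) transfers this to the limit: $\mathbb{E}[I_\infty] = \lim_{m\to\infty}\mathbb{E}[I_m] = I_0$, completing the proof.
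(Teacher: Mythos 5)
Your proof is correct and follows essentially the same route as the paper: boundedness from the binary input, the martingale property via the local information conservation identity for each invertible kernel, and convergence of the bounded martingale with $\mathds{E}(I_\infty)=I_0$. Your explicit appeal to Doob's theorem and dominated convergence is slightly more careful than the paper's phrasing, but the substance is identical.
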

\begin{proof}
To begin with, we have that $0 \leq I_m \leq 1 $ for all $m \in \mathds{N}$, which is due to the fact that the channel $W_{B_1,\dots,B_m}$ has binary inputs. 
Let $m \in \mathds{N}$, we have that
 \begin{IEEEeqnarray}{rCl} 
 && \mathds{E}_{B_{m+1}} (I_{m+1}| B_1, \dots, B_m) \\
 &=&   \mathds{E}_{B_{m+1}} \left(I(W_{ B_1, \dots, B_m , B_{m+1}} ) | B_1, \dots, B_m \right) \qquad \\
 &=&  \sum_{b= 1}^{l_{m+1}}  \mathds{P}(B_{m+1}= b)  I(W_{ B_1, \dots, B_m , b} ) \\
 &=& \frac{1}{l_{m+1}}  \sum_{b = 1}^{l_{m+1}}   I(W_{ B_1, \dots, B_m , b } ) \overset{(a)}{=}  I(W_{ B_1, \dots, B_m}) \label{eq-martingale}
 \end{IEEEeqnarray}
 where $(a)$ is a consequence the information conservation principle \eqref{eq-InformationConservation} and of \eqref{eq_VirtualChannels}.  
Finally, one can write that 
  \begin{IEEEeqnarray*}{rCl}
 \hspace{-3mm}&& \mathds{E}_{B_1, \dots, B_m, B_{m+1}} (I_{m+1})  \\
 &=&  \mathds{E}_{B_1, \dots, B_m} \left(  \mathds{E}_{B_{m+1}} \left(I(W_{ B_1, \dots, B_m, B_{m+1}}) | B_1, \dots, B_m \right) \right) \qquad   \\ 
 &\overset{(a)}{=}& \mathds{E}_{B_1, \dots, B_m}  (I_m) = I_m
 \end{IEEEeqnarray*}
 where $(a)$  is a consequence of \eqref{eq-martingale}. Thus, $\{\mathds{E} (I_m)\}_m$ is constant, and 
 \begin{equation}
   \mathds{E} (I_m) = \mathds{E} (I_{m-1}) =  \dots  = \mathds{E} (I_{0}) = I_0  \label{eq-expectation}.
 \end{equation}
Thus, $\{I_m\}_m$ is a bounded martingale, hence, uniformly integrable and thus convergent in probability to a random variable $I_\infty$ such that $
  \mathds{E} ( I_\infty ) =  I_0  , $
which is due to \eqref{eq-expectation}.
\end{proof}
The convergence of the martingale $\{I_m\}_m$ follows mainly from the information conservation property in \eqref{eq-InformationConservation} and is thus universal. 
On the other hand, polarization depends on to the kernel properties, as shown in the following Theorem. 

\begin{theorem}\label{th-inequality}
If for every kernel $T_{l}$ forming the transformation matrix of the code, and every past sequence $(b_1, \dots, b_{m-1})$, there exist $ \alpha ,\beta > 0 $ such that the following inequality holds:
\begin{IEEEeqnarray}{rCl} \label{eq-lemma2}
 && \forall b_m \in [1: l_m] \ , \ | I(W_{b_1, \dots, b_{m-1}, b_m}) - I(W_{b_1, \dots, b_{m-1}})   | \nonumber \\
 && \qquad \qquad \geq  I(W_{b_1, \dots, b_{m-1}})^{\alpha} \cdot \left(1 - I(W_{b_1, \dots, b_{m-1}})\right)^{\beta}, 
\end{IEEEeqnarray}
then $I_\infty$ is a Bernoulli distributed variable with 
\begin{equation}
 \mathds{P}(I_\infty = 1) = 1 - \mathds{P}(I_\infty = 0) = I_0. 
\end{equation}
\end{theorem}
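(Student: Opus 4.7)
The plan is to follow the Arikan-style argument for characterizing the martingale limit, but driven by the new inequality \eqref{eq-lemma2} instead of the squared-Bhattacharyya bound used in \cite{polar,exp_urbanke}. From Lemma~\ref{lem-mart} we already know that $\{I_m\}_m$ is a bounded martingale; since it is bounded, it converges not only in probability but also in $L^2$, so in particular the increments satisfy $\mathbb{E}[(I_m - I_{m-1})^2] \to 0$. The whole point will be to combine this vanishing of increments with the inequality \eqref{eq-lemma2} to force the limit $I_\infty$ to live in $\{0,1\}$, after which the value $I_0$ follows immediately from $\mathbb{E}(I_\infty) = I_0$.

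Concretely, I would proceed in three steps. First, verify that for a bounded martingale the orthogonality of increments gives $\mathbb{E}(I_m^2) - \mathbb{E}(I_{m-1}^2) = \mathbb{E}[(I_m - I_{m-1})^2]$; since $\mathbb{E}(I_m^2)$ is bounded and non-decreasing, the increments form a summable sequence and hence $\mathbb{E}[(I_m - I_{m-1})^2] \to 0$. Second, apply hypothesis \eqref{eq-lemma2} conditionally on $(B_1,\dots,B_{m-1})$: for every realization of $B_m$,
\begin{equation*}
(I_m - I_{m-1})^2 \;\geq\; I_{m-1}^{2\alpha}\,(1 - I_{m-1})^{2\beta}.
\end{equation*}
Taking expectation yields $\mathbb{E}\bigl[I_{m-1}^{2\alpha}(1-I_{m-1})^{2\beta}\bigr] \leq \mathbb{E}[(I_m - I_{m-1})^2] \to 0$. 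Third, since the function $x \mapsto x^{2\alpha}(1-x)^{2\beta}$ is continuous and bounded on $[0,1]$, and since $I_m \to I_\infty$ (in $L^2$, hence along a subsequence a.s.), the dominated convergence theorem gives $\mathbb{E}[I_\infty^{2\alpha}(1-I_\infty)^{2\beta}] = 0$. Because the integrand is non-negative, we conclude $I_\infty^{2\alpha}(1-I_\infty)^{2\beta} = 0$ almost surely, i.e.\ $I_\infty \in \{0,1\}$ a.s.

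To finish, set $p \triangleq \mathbb{P}(I_\infty = 1)$, so that $\mathbb{P}(I_\infty = 0) = 1-p$; Lemma~\ref{lem-mart} gives $\mathbb{E}(I_\infty) = I_0$, hence $p \cdot 1 + (1-p)\cdot 0 = I_0$, i.e.\ $p = I_0$. This yields the stated Bernoulli law.

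The main obstacle is the second step: one must make sure that the pointwise lower bound \eqref{eq-lemma2} can actually be lifted to an expectation statement that is then compared against the martingale-increment variance. This is really a hypothesis-to-conclusion bridge and is straightforward provided \eqref{eq-lemma2} is interpreted as holding almost surely along the random path $(B_1,\dots,B_m)$; one mild point of care is the exchange of limits (subsequential a.s.\ convergence plus dominated convergence) needed to pass the vanishing second moment to the limit variable $I_\infty$. Everything else is bookkeeping with properties already established in Lemma~\ref{lem-mart}.
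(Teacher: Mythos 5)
Your proposal is correct, and it reaches the conclusion by the same overall strategy as the paper: use hypothesis \eqref{eq-lemma2} to lower-bound the increment $|I_m - I_{m-1}|$ by $f(I_{m-1})$ with $f(x)=x^{\alpha}(1-x)^{\beta}$, show that the increments vanish in the limit, deduce $f(I_\infty)=0$ by continuity so that $I_\infty\in\{0,1\}$, and then pin down the Bernoulli parameter via $\mathds{E}(I_\infty)=I_0$ from Lemma~\ref{lem-mart}. Where you differ is in how the vanishing of the increments is established and passed to the limit. The paper argues pathwise: since $I_m$ converges to $I_\infty$, the increments $|I_m-I_{m-1}|$ tend to $0$ along each trajectory, hence $f(I_{m-1})\to 0$ and $f(I_\infty)=0$. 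You instead use the orthogonality of martingale increments to get $\mathds{E}\bigl[(I_m-I_{m-1})^2\bigr]=\mathds{E}(I_m^2)-\mathds{E}(I_{m-1}^2)\to 0$, then square \eqref{eq-lemma2}, take expectations, and conclude $\mathds{E}\bigl[I_\infty^{2\alpha}(1-I_\infty)^{2\beta}\bigr]=0$ by dominated convergence along an a.s.\ convergent subsequence. Your route is somewhat heavier but actually more watertight: the paper's pathwise argument implicitly requires almost-sure convergence of the martingale (its invocation of a ``uniformly converging sequence'' is loose, and Lemma~\ref{lem-mart} as stated only claims convergence in probability), whereas your $L^2$ argument needs only boundedness plus the martingale property, both of which are explicitly available. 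Either way the hypothesis is correctly interpreted as holding almost surely along the random path, so the hypothesis-to-conclusion bridge you flagged as the main obstacle is sound.
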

\begin{proof}
Here we present a novel proof showing clearly the dependency of the polarization process on the choice of the kernel. 
Let $m \in \mathds{N}$ and let $(b_1,\dots , b_{m-1}) $ be the vector of past indices in the polarization tree. Under the assumptions of Theorem \ref{th-inequality}, we have that for all possible $b_m \in [1: l_m] $, 
\begin{IEEEeqnarray}{rCl}
 && | I(W_{b_1, \dots, b_{m-1}, b_m}) - I(W_{b_1, \dots, b_{m-1}})   | \nonumber \\
 && \qquad \qquad \geq  I(W_{b_1, \dots, b_{m-1}})^{\alpha} \cdot \left(1 - I(W_{b_1, \dots, b_{m-1}})\right)^{\beta}. 
\end{IEEEeqnarray}  
This implies that on the cylinder set $(B_1,\dots , B_m) $, the following inequality holds with probability $1$, i.e., 
\begin{equation}
 |  I_m - I_{m-1} | \geq I_{m-1}^{\alpha} \cdot (1- I_{m-1})^{\beta} \geq 0 . 
\end{equation}
Hence, since $\{I_{m}\}_m$ is a uniformly converging sequence, with limit $I_\infty$, and since the function $f: x \rightarrow x^\alpha (1-x)^\beta $ is continuous, then 
\begin{equation}
 0 \geq I_\infty^{\alpha}  \cdot (1- I_\infty)^{\beta}  \geq 0   , 
\end{equation}
 which in turn implies that $I_\infty = 1 \  \textrm{or} \ 0$. 
\end{proof}
What we presented can be seen as an alternative to the original proof of  polarization made by Arikan and then extended in \cite{exp_urbanke} to arbitrary kernels. 
Even if this inequality may seem a bit restrictive, it is verified for a big family of kernels; in the following, we prove it for Arikan's $T_2$ binary kernel, and for the $T_3$ kernel given in \eqref{eq-T3Tkernels}. 

\subsection{Examples of kernels polarization} 
In the following, we prove that the kernels $T_2$ and $T_3$ are polarizing according to the multi-kernel definition by showing that the constraints in \eqref{eq-lemma2} are met for these two kernels.

For the kernel $T_2$, consider a channel $\mathcal{W}$ with binary inputs and finite outputs. Let $(U_1,U_2)$ be the Bern($\frac{1}{2}$) auxiliary inputs and $(X_1,X_2)$ be the binary channel inputs where $X_1^2 = U_1^2 \cdot T_2$. 
The independence of $(X_1,X_2)$ implies that $(X_1,Y_1)$ and $(X_2,Y_2)$ are all pairwise independent and $H(X_1| Y_1) = H(X_2|Y_2)= H(X|Y) \triangleq H_0 \leq 1$. 

 The inequality in \eqref{eq-lemma2} implies the pair of constraints: 
\begin{IEEEeqnarray*}{rCl}
| I(U_1; Y_1^2) - I(X; Y)  |   \geq  I(X;Y)^{\alpha} (1 - I(X;Y))^{\beta} && \\
| I(U_2; Y_1^2| U_1) - I(X; Y)  |   \geq I(X;Y)^{\alpha} (1 - I(X;Y))^{\beta},
\end{IEEEeqnarray*}   
but since $I(U_2; Y_1^2| U_1) = 2 I(X; Y) - I(U_1; Y_1^2)$, then it suffices to prove the first inequality, which amounts to
\begin{IEEEeqnarray}{rCl}
 H(X_1 \oplus X_2| Y_1^2) -  H_0 \geq H_0^{\beta}.(1 -  H_0)^{\alpha} \label{eq-target_T2} . 
\end{IEEEeqnarray}
We prove in Appendix \ref{app_ProofT2}, that this holds with the choice $\alpha = 1$ and $\beta =2$.
 
 As for $T_3$, under similar assumptions on $(U_1,U_2,U_3)$ and the independence of $(X_1,Y_1)$, $(X_2,Y_2)$ and $(X_3,Y_3)$, and defining $H(X_1|Y_1) = H(X_2|Y_2) = H(X_3|Y_3) \triangleq H_0$, the condition in \eqref{eq-lemma2} amounts to proving that: 
 \begin{IEEEeqnarray*}{rCl}
| I(U_1; Y_1^3) - I(X; Y)  |   \geq  I(X;Y)^{\alpha} (1 - I(X;Y))^{\beta} && \\
| I(U_2; Y_1^3| U_1) - I(X; Y)  |   \geq I(X;Y)^{\alpha} (1 - I(X;Y))^{\beta} &&\\ 
| I(U_3; Y_1^3| U_1^2) - I(X; Y)  |   \geq I(X;Y)^{\alpha} (1 - I(X;Y))^{\beta},
\end{IEEEeqnarray*}   
Again, it suffices to prove only the two first inequalities, which amounts to proving that
\begin{IEEEeqnarray}{rCl}
 | H(X_1\oplus X_2 \oplus X_3| Y_1^3) - H_0 | & \geq &  H_0 ^{\beta} (1 - H_0 )^{\alpha}   \label{eq-target_T3_1} \\
\hspace*{-1cm} | H(X_2 \oplus X_3| Y_1^3 , X_1 \oplus X_2 \oplus X_3) - H_0 | & \geq & H_0^{\beta} (1 - H_0)^{\alpha}.   \label{eq-target_T3_2} 
\end{IEEEeqnarray} 
In Appendix \ref{app_ProofT3}, we show that this holds with $\alpha = \beta = 2$.

\section{Rate of convergence of multi-kernel polar codes}
The rate of convergence of the sequence $\{I_m\}_m$, which is related to the error exponent of the generator matrix $G_{N} = T_{l_1} \otimes \dots \otimes T_{l_m}$, is the asymptotic convergence rate of the probability of error. 
In this section, we show how to derive the convergence rate of a multi-kernel polar code based on the rate of convergence of each of the constituent kernels $T_{l_1}, \dots , T_{l_m}$.

\subsection{Definitions}
Let us first extend the definition of the Bhattacharyya parameter, a key measure in the rate of convergence of polar codes, to the case of multi-kernel polar codes. 
We recall that the Bhattacharyya parameter associated with a binary input channel $W$ is \vspace{-1mm}
\begin{equation}
 Z(W) \triangleq \sum_{y \in\mathcal{Y}} \sqrt{W(y|1) W(y|0)}. \vspace{-1mm}
\end{equation}

Accordingly to the notation $W^{(N)}_i = W_{b_1, \dots, b_m}$, where $(b_1, \dots, b_m)$ are the mixed radix decomposition of $i$ in the basis $l_1, \dots, l_m$, we define a random Bhattacharyya parameter associated to the random realization of $(B_1,\dots, B_m)$ as follows: 

\begin{definition}[Bhattacharyya parameter]\label{Def-Bhattacharyya}
The random Bhattacharyya parameter associated to the random realization of $(B_1,\dots, B_m)$ is given by
\begin{equation*}
 Z_m \triangleq Z(W_{B_1, \dots, B_m}) = \sum_{z \in\mathcal{Z}} \sqrt{W_{B_1, \dots, B_m}(z|1) W_{B_1, \dots, B_m}(z|0)}. 
\end{equation*}
where $z$ is defined in \eqref{eq-new-channel}.
\end{definition}
The Bhattacharyya parameter is particularly useful since it yields bounds on the block error probability $P_e(N)$ of a length $N$ code, see \cite{exp_urbanke}. 

Next, we define the rate of convergence of a multi-kernel polar code based on the convergence of the sequence $\{Z_m\}_m$. 
\begin{definition} \label{Def-Error-Exp}
A multi-kernel polar code has a rate of convergence $E$ if and only if the following properties hold: 
\begin{enumerate}
\item For all $\gamma \geq E$,  
\begin{equation}
 \lim_{m \to \infty} \mathds{P}( Z_m \geq 2^{-N^\gamma }) = 1;
\end{equation}
\item for all $0 <\gamma \leq E$
\begin{equation}
 \lim_{m \to \infty} \mathds{P}( Z_m \leq 2^{-N^\gamma }) = I(W).
\end{equation}
\end{enumerate}
\end{definition}
The convergence rate relates directly to the error exponent, i.e. the rate of convergence of the block error probability to $0$. As such, if a polar code has an convergence rate of $E$, then the block error probability satisfies
\begin{equation}
 P_e(N) \underset{n\to \infty}{\sim} 2^{-N^E}.
\end{equation}
For polar codes based on Arikan's kernels, i.e. $G_{N}= T_2^{\otimes n}$, the convergence rate was shown in \cite{ArikanRateConvergence} to be equal to $E=0.5$. 
For polar codes formed by larger kernels, authors in \cite{exp_urbanke} derive the rate of convergence through the \emph{partial distances} of the given kernel $T_l$ of length $l$ yielding, for instance, for kernel $T_3$ proposed in (\ref{eq-T3Tkernels}) a rate of convergence of $E=0.42$. 

\subsection{Calculation of the rate of convergence}
The rate of convergence of a polar code can be derived for Arikan's $T_2$ kernels, and more generally, for arbitrary kernels $T_l$, through the following result.

\begin{lemma}{\cite[Theorem 14]{exp_urbanke} }~\\ 
Let $T_l$ be an arbitrary kernel with size $l \times l$, then the rate of convergence of the associated polar code is given by 
\begin{equation}\label{def-E_j}
 E_l \triangleq \dfrac{1}{l} \sum_{i=1}^l \log_l (D_i)
\end{equation}
where $D_i$ is the partial distance of the $i$-th row of the transformation matrix $T_l = (\mathbf{t}^\dagger_1, \dots,\mathbf{t}^\dagger_i, \dots, \mathbf{t}^\dagger_l )^\dagger$, defined as 
\begin{equation}
 D_i \triangleq dist( \mathbf{t}_i, <\mathbf{t}_{i+1},\dots, \mathbf{t}_{l}>)
\end{equation}
where $<\mathbf{t}_{i+1},\dots, \mathbf{t}_{l}>$ is the linear code spanned by the remaining rows of $T_l$ and $A^\dagger$ the transpose of the matrix $A$.
\end{lemma}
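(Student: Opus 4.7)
The plan is to track the random Bhattacharyya process $\{Z_m\}$ from Definition \ref{Def-Bhattacharyya} and show that, after $m = \log_l N$ polarization steps through kernel $T_l$, the typical magnitude of $-\log_2 Z_m$ grows like $N^{E_l}$. The starting point is the per-step Bhattacharyya recursion: for a single application of $T_l$, the $i$-th virtual subchannel $W_i$ obtained from $W$ satisfies a bound of the form
\begin{equation*}
Z(W_i) \;\leq\; c_l \cdot Z(W)^{D_i},
\end{equation*}
for some finite $c_l$ depending only on $l$, and a matching lower bound $Z(W_i) \geq Z(W)^{D_i}$ (up to constants) when $Z(W)$ is small enough. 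The exponent $D_i$ is exactly the partial distance of the $i$-th row, because $D_i$ controls the minimum Hamming weight of the effective code that a genie-aided decoder of $U_i$ faces after knowing $U_1^{i-1}$. This pair of inequalities is what ties the kernel's algebraic invariant (partial distance) to the analytic evolution of $Z$.

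Next I would pass to logarithms. Setting $L_m \triangleq \log_2\!\bigl(1/Z_m\bigr)$, the recursion above becomes, to leading order,
\begin{equation*}
L_{m+1} \;\approx\; D_{B_{m+1}} \cdot L_m,
\end{equation*}
where $B_{m+1}$ is uniform on $[1:l]$. Taking one more logarithm transforms the multiplicative random process into an additive one:
\begin{equation*}
\log_l L_m \;\approx\; \log_l L_0 + \sum_{k=1}^{m} \log_l D_{B_k}.
\end{equation*}
By the strong law of large numbers applied to the i.i.d.\ random variables $\log_l D_{B_k}$, the normalized sum converges almost surely to $\mathbb{E}[\log_l D_{B}] = \tfrac{1}{l}\sum_{i=1}^{l} \log_l D_i = E_l$. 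Hence, on the good event, $L_m$ is of order $l^{m E_l} = N^{E_l}$, which is exactly $Z_m \approx 2^{-N^{E_l}}$ as required by Definition \ref{Def-Error-Exp}.

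The last step is to split the analysis into the two regimes of Definition \ref{Def-Error-Exp}. For the achievability direction ($Z_m \leq 2^{-N^\gamma}$ with probability $I(W)$ for $\gamma \leq E_l$), I would invoke the polarization result of Theorem \ref{th-inequality} to guarantee that $Z_m$ enters an arbitrarily small neighborhood of $0$ with probability tending to $I(W)$; once inside that neighborhood, the recursion is firmly in its fast regime, and a Chernoff/large-deviations bound on $\frac{1}{m}\sum \log_l D_{B_k}$ yields exponential-in-$m$ concentration around $E_l$, defeating any $\gamma < E_l$ with room to spare. For the converse direction ($\gamma \geq E_l$), the matching lower bound $Z(W_i) \geq Z(W)^{D_i}$ combined with the same concentration argument rules out convergence faster than $N^{E_l}$.

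The main obstacle will be the transition between the two regimes: the "linearization" $L_{m+1} \approx D_{B_{m+1}} L_m$ is only valid once $Z_m$ is below a kernel-dependent threshold, so the argument must carefully decouple an initial \emph{polarization} phase (where Theorem \ref{th-inequality} drives $Z_m$ toward $0$ or $1$, with no meaningful exponent) from a subsequent \emph{refinement} phase (where the partial-distance exponents $D_i$ take over). Making this decoupling quantitative, so that the wasted steps at the boundary are $o(m)$ and therefore do not affect the limiting exponent $E_l$, is the delicate part; the rest reduces to standard SLLN and large-deviation estimates on an i.i.d.\ sum.
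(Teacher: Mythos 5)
The paper gives no proof of this Lemma: it is stated verbatim as a quoted result from \cite[Theorem 14]{exp_urbanke}, and the authors use it as a black box. So there is no ``paper's own proof'' to compare your attempt against. Your sketch is, however, directionally consistent with the actual proof in the cited reference: the two-sided per-step bound $Z(W)^{D_i} \leq Z(W_i) \leq 2^{l-i}Z(W)^{D_i}$ is exactly \cite[Lemma 13]{exp_urbanke} (the present paper reuses it later in the proof of Theorem \ref{theo-error-exponents}), the double-logarithm change of variable turns the multiplicative recursion into an additive i.i.d.\ walk on $\log_l D_{B_k}$ with mean $E_l$, and the bootstrapping split into a polarization phase and a refinement phase is precisely the device used in Arikan--Telatar and in \cite{exp_urbanke}.

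That said, as a stand-alone proof your sketch has real gaps that you yourself flag only qualitatively. First, the linearization $L_{m+1}\approx D_{B_{m+1}}L_m$ carries an additive error $-\log_2 c_l$; making it negligible requires $L_m$ to already exceed a kernel-dependent threshold, and quantifying how many ``wasted'' steps are needed to reach that threshold (and that this count is $o(m)$ on the high-probability event) is exactly the hard part. Second, for rows with $D_i=1$ the upper bound $Z(W_i)\leq 2^{l-i}Z(W)$ does not decrease $Z$ at all and can even exceed $1$, so the ``refinement regime'' must be defined carefully (cf.\ the $K^{-(l^\star+1)}$ threshold and the index $i_j$ with $D_{i_j}\geq 2$ in the paper's own Appendix~\ref{app_ProofTheo2}, which handles exactly this issue for the multi-kernel case). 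Third, invoking the SLLN gives almost-sure convergence of $\frac{1}{m}\sum\log_l D_{B_k}$, but Definition~\ref{Def-Error-Exp} is a statement about limit probabilities, so you must couple the SLLN/large-deviation estimate with the polarization event $\{Z_m\to 0\}$ of probability $I(W)$; the union/intersection bookkeeping is where the ``$\to I(W)$'' on the right-hand side actually comes from. These are the same technical points the paper later works out in detail for the multi-kernel generalization, so if you want a complete proof of this Lemma you should follow Appendix~\ref{app_ProofTheo2} specialized to a single kernel, or read \cite[Section V]{exp_urbanke} directly.
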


In the following, we show that the rate of convergence of a multi-kernel polar code can be derived on the basis of the rate of convergence of each of the constituent kernels.  
\begin{theorem}\label{theo-error-exponents}
Consider a multi-kernel polar code in which each of the $s$ distinct constituent kernels $T_{l_j}$ has an error exponent $E_{l_j}$ for $j \in [1:s]$ and is used with frequency $p_{ j}$ in the Kronecker composition $G_{N}$ as $N \to \infty$. Then, the rate of convergence of the resulting mixed kernel polar code is given by 
\begin{equation}
 E = \sum_{j=1}^s \frac{p_{ j} \log_2(l_j)}{\sum_{j^\prime } p_{ j^\prime } \log_2(l_{j^\prime})} \cdot E_{l_j}
\end{equation}
\end{theorem}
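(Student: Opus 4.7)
The plan is to follow the Arikan-Urbanke strategy for single-kernel polar codes, namely track the Bhattacharyya parameter $Z_m$ through the polarization tree and then apply a law of large numbers, but adapted to the multi-kernel setting where different kernels act at different stages. The key one-step tool, borrowed from \cite{exp_urbanke}, is that on the event $\{I_\infty=1\}$ each polarization step through the kernel acting at position $k$ with random index $B_k$ satisfies, up to a bounded multiplicative factor,
\begin{equation*}
 Z_k \;\leq\; c \cdot Z_{k-1}^{\mathcal{D}_k},\qquad \mathcal{D}_k \triangleq D_{j(k),\,B_k},
\end{equation*}
where $j(k)\in[1:s]$ indexes which of the $s$ distinct kernels is used at step $k$ and $D_{j,i}$ is the $i$-th partial distance of $T_{l_j}$. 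Iterating over $m$ steps, one obtains asymptotically
\begin{equation*}
 \log_2(-\log_2 Z_m) \;\sim\; \sum_{k=1}^{m} \log_2 \mathcal{D}_k.
\end{equation*}

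Next I exploit the multi-kernel structure of this sum. Let $n_j$ denote the number of steps $k\in[1:m]$ with $j(k)=j$, so that $n_j/m \to p_j$ and $\log_2 N = \sum_{j=1}^{s} n_j \log_2 l_j$. The random variables $\{\log_2 \mathcal{D}_k\}$ are independent but not identically distributed; however, they decompose into $s$ iid blocks, with $\log_2 \mathcal{D}_k$ uniform on $\{\log_2 D_{j,1},\dots,\log_2 D_{j,l_j}\}$ whenever $j(k)=j$. Grouping by kernel type and applying the strong law of large numbers to each block yields
\begin{equation*}
 \frac{1}{n_j}\sum_{k:\,j(k)=j}\log_2 \mathcal{D}_k \;\longrightarrow\; \frac{1}{l_j}\sum_{i=1}^{l_j}\log_2 D_{j,i} \;=\; E_{l_j}\log_2 l_j,
\end{equation*}
where the last identity is the definition \eqref{def-E_j}. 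Summing over $j$ then gives
\begin{equation*}
 \log_2(-\log_2 Z_m) \;=\; \sum_{j=1}^{s} n_j\, E_{l_j}\log_2 l_j + o(m).
\end{equation*}

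The rate $E$ is characterised by $\log_2(-\log_2 Z_m)\sim E\log_2 N$. Dividing the last display by $\log_2 N = \sum_{j} n_j \log_2 l_j$ and letting $n_j/m \to p_j$ produces exactly the weighted average of the theorem. To translate this into the two parts of Definition \ref{Def-Error-Exp}, I would condition on the polarisation outcomes of Theorem \ref{th-inequality}: on $\{I_\infty=1\}$ (probability $I_0$) the analysis above drives $Z_m$ below $2^{-N^\gamma}$ for every $\gamma\le E$, while on $\{I_\infty=0\}$ (probability $1-I_0$) standard bounds keep $Z_m$ bounded away from $0$, yielding the complementary $\gamma\ge E$ statement.

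The hard part will be upgrading the heuristic equivalence $\log_2(-\log_2 Z_m)\sim\sum_k \log_2 \mathcal{D}_k$ into a quantitative high-probability statement valid uniformly in $m$, which requires controlling the transient regime where $Z_m$ is not yet small and combining the one-step bound with a large-deviations estimate. This is exactly the technical machinery developed in \cite{ArikanRateConvergence,exp_urbanke} for single-kernel codes; the only structural modification in the multi-kernel setting is that a single iid sum is replaced by a concatenation of $s$ iid blocks, and the grouping used above isolates this change cleanly, so the single-kernel proof can be adapted block by block without further conceptual overhead.
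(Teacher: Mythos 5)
Your overall strategy---iterate the partial-distance bound $Z_k \le c\, Z_{k-1}^{\mathcal{D}_k}$ from \cite{exp_urbanke}, group the exponents $\log_2 \mathcal{D}_k$ into $s$ blocks indexed by kernel type, apply the law of large numbers within each block, and normalise by $\log_2 N = \sum_j n_j \log_2 l_j$---is exactly the paper's: the occurrence sets $\mathcal{N}_{i,j}^m$ with $n_{i,j}^m/m \to p_j/l_j$ play the role of your block decomposition, and the identification of $E$ as the weighted average is the same computation. Two points, however, do not hold up as stated.

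First, your plan for the $\gamma \ge E$ direction is wrong. You propose to obtain it from the event $\{I_\infty = 0\}$, on which $Z_m$ stays bounded away from $0$; that can only give $\liminf_m \mathds{P}(Z_m \ge 2^{-N^\gamma}) \ge 1 - I_0$, whereas condition 1) of Definition \ref{Def-Error-Exp} requires the limit to equal $1$. The point of that condition is that \emph{even on the polarising paths} $Z_m$ cannot decay faster than $2^{-N^{E}}$. The correct mechanism is the \emph{left-hand} side of the partial-distance inequality, $Z_m \ge Z_{m-1}^{D_{b_m}}$, valid on every sample path: iterating gives $Z_m \ge Z_0^{\prod_k D_{b_k}}$, and on the typical index sequences (probability tending to $1$) the product of partial distances is at most $N^{E+\epsilon}$, whence $Z_m \ge 2^{-N^{E+\epsilon'}}$ with probability tending to $1$. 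This is an argument about the index sequence, not about $I_\infty$.

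Second, the ``hard part'' you defer is where essentially all of the work lies, and it is not a verbatim transplant of the single-kernel case. The constant in $Z_k \le K Z_{k-1}^{\mathcal{D}_k}$ (with $K = 2^{l^\star}$) is harmful at every step whose partial distance equals $1$---and Lemma \ref{Lemma-partial-distances} only guarantees \emph{one} row per kernel with $D \ge 2$---so $Z$ can actually grow by a factor $K$ on a positive fraction of steps. The paper neutralises this with a two-stage bootstrap (first $Z_m \le K^{-(l^\star+1)}$ on a set of probability at least $I_0 - \epsilon$, then geometric decay $Z_m \le K^{-m/(4l^\star)}$ by balancing the fraction of steps with $D\ge 2$ against the rest via the threshold $\alpha^\star$), followed by a three-epoch argument $m_1 < m_2 < m_3$ that converts geometric decay into the doubly-exponential bound $Z_{m_3} \le K^{-N_3^{E(1-\epsilon)}}$. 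Your sketch asserts this adapts ``block by block''; that is plausible, but it is precisely the content that has to be written down, and the $D=1$ rows are the place where a naive adaptation would fail.
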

\begin{proof}
Let $m$ be the index of the current iteration in the Kronecker product, let $i \in [1:N]$ and let $(b_1,\dots, b_m)$ be its corresponding mixed radix decomposition. The proof of this theorem follows from the following inequality, which is a result of (\cite[Lemma 13]{exp_urbanke}), 
\begin{IEEEeqnarray}{rCl}
   Z(W_{b_1,\dots, b_{m-1}})^{D_{b_m} } &\leq& Z(W_{b_1,\dots, b_m}) \nonumber \\
   &\leq& 2^{l_m-b_m} Z(W_{b_1,\dots, b_{m-1}})^{D_{b_m}}
\end{IEEEeqnarray}
where $D_{b_m}$, with $b_m \in [1:l_m]$, is the partial distance of the row $\mathbf{t}_{b_m}$ in the kernel $T_{l_m}$. 
The proof that items 1) and 2) in definition \ref{Def-Error-Exp} holds, follows by analytic calculus and is presented in Appendix \ref{app_ProofTheo2}. 
\end{proof} 
%
As a result, the rate of convergence of a multi-kernel polar code consists in a weighted sum of the error exponents of each of the constituent kernels, where the weights are related to the frequency of occurrence of a kernel in the construction. 
 
\appendices 
\section{Proof of polarization of $T_2$ and $T_3$}
 \subsection{Proof of polarization of $T_2$}\label{app_ProofT2}
 
 To prove \eqref{eq-target_T2}, note that 
\begin{IEEEeqnarray}{rCl}
&& H(X_1 \oplus  X_2 | Y_1^2) \nonumber \\
&=& \sum_{y_1, y_2} P_{Y_1Y_2}(y_1,y_2) H(X_1 \oplus X_2 | y_1, y_2) \\
&\overset{(a)}{=}& \sum_{y_1, y_2} P_{Y_1}(y_1) P_{Y_2}(y_2) H(X_1 \oplus X_2 | y_1, y_2) \\
&\overset{(b)}{=}&  \sum_{y_1, y_2} P_{Y_1}(y_1) P_{Y_2}(y_2) h_2(q_{y_1} \star q_{y_2} ) 
\end{IEEEeqnarray}
where $(a)$ is a consequence from the independence of $Y_1$ and $Y_2$, and $(b)$ follows since, conditioned on $(y_1, y_2)$, $X_1 \oplus X_2$ is a binary variable with probability 
\begin{IEEEeqnarray*}{rCl}
 \mathds{P}( X_1\oplus X_2 = 0 |y_1, y_2) &=& \mathds{P}( X_1 = 0 |y_1, y_2) \star \mathds{P}( X_2 = 0 |y_1, y_2) \\
 &=& \mathds{P}( X_1 = 0 |y_1) \star \mathds{P}( X_2 = 0 | y_2) \\
 &\triangleq& q_{y_1}\star q_{y_2} ,  
\end{IEEEeqnarray*}
where $\star$ denotes the binary convolution operator, and where the last step is a result of the memorylessness of the channel. As defined, $q_{y_1}$ and $q_{y_2}$ satisfy the following equality 
\begin{equation*}
 \sum_{y_1} P_{Y_1}(y_1) h_2(q_{y_1}) = \sum_{y_2} P_{Y_2}(y_2) h_2(q_{y_1}) = H(X_1|Y_1) = H_0 .
\end{equation*}

Then, resolting to Mrs Gerber's Lemma \cite{wyner}, i.e., convexity of $h_2(p \star h_2^{-1}(x))$ in $x$, we can write 
\begin{IEEEeqnarray}{rCl}
 H(X_1 \oplus  X_2 | Y_1^2) &\geq& \sum_{y_1} P_{Y_1}(y_1) h_2 \left( q_{y_1} \star h_2^{-1}(H_0) \right) \\
  &\geq&  h_2 \left( h_2^{-1}(H_0) \star h_2^{-1}(H_0) \right) \label{eq-alpha_alpha}
\end{IEEEeqnarray} 

Finally, it can be proved that, $\forall  a  \in [0:1/2] $, 
\begin{equation}
 h_2(a \star a) - h_2(a) \geq h_2^2(a) \cdot (1- h_2(a)) \geq 0   \label{eq-lowerbound_h2_a_star_a}.
\end{equation}
which, when replacing $a = h_2^{-1}(H_0)$, yields the desired result. 

  \subsection{Proof of polarization of $T_3$}\label{app_ProofT3}
  To prove \eqref{eq-target_T3_1}, we first note that similar to \eqref{eq-alpha_alpha}, 
  \begin{IEEEeqnarray}{rCl}
&& H(X_1 \oplus X_2 \oplus X_3| Y_1^3) \nonumber \\
&\geq&  h_2 \left( h_2^{-1}(H_0) \star h_2^{-1}(H_0) \star h_2^{-1}(H_0)\right) \label{eq-alpha_alpha_alpha}\\
 &\overset{(a)}{\geq}& H_0 + h_2 \left( h_2^{-1}(H_0) \star h_2^{-1}(H_0) \right)^2 \cdot(1 - H_0) \\
 &\overset{(b)}{\geq}& H_0 + H_0^2 \cdot (1 - H_0) \geq  H_0 + H_0^2 \cdot(1 - H_0)^2
\end{IEEEeqnarray}
where, $(a)$ and $(b)$ are consequences of \eqref{eq-lowerbound_h2_a_star_a}.

 Next, to prove \eqref{eq-target_T3_2}, we write the following: 
 \begin{IEEEeqnarray*}{rCl}
 && H( X_2 \oplus X_3| Y_1^3, X_1\oplus X_2 \oplus X_3) \nonumber \\
 &=&  H(X_2\oplus X_3| Y_1^3) + H(X_2| Y_1^3) - H(X_1 \oplus X_2 \oplus X_3| Y_1^3) \\
 &=&  H(X_2| Y_2) +  H(X_2 \oplus X_3| Y_1^3) - H(X_1 \oplus X_2 \oplus X_3| Y_1^3) \\
 &=& H_0 +  H(X_2 \oplus X_3| Y_2^3) - H(X_1\oplus X_2 \oplus X_3| Y_1^3) .   \label{eq_intermediate_1} 
 \end{IEEEeqnarray*} 
Next, we note that we can write the following upper bound on $ H(X_2+X_3| Y_2^3)$. 
\begin{IEEEeqnarray}{rCl}
&& H(X_2 \oplus X_3 | Y_2^3) \nonumber \\
&{=}&  \sum_{y_2, y_3} P_{Y_2}(y_2) P_{Y_3}(y_2) h_2(q_{y_2} \star q_{y_3} ) \\
&\overset{(a)} {\leq}& H(X_2|Y_2 ) + (1-H(X_3 |Y_3)) \cdot H(X_2|Y_2)  \\
&=& H_0 + H_0\cdot(1- H_0)
\end{IEEEeqnarray}
where $(a)$ is due to a consequence of Mrs Gerber's lemma \cite{wyner},  
\begin{equation}
 h_2(a\star b) \leq h_2(a) + h_2(b) - h_2(a)h_2(b) . 
\end{equation}
Thus, we can finally write that
 \begin{IEEEeqnarray*}{rCl}
&&  H( X_2 \oplus X_3| Y_1^3, X_1 \oplus X_2 \oplus X_3) - H_0 \nonumber \\
&=& H(X_2 \oplus X_3| Y_2^3) - H(X_1 \oplus X_2 \oplus X_3| Y_1^3) \\
&\overset{(a)} {\leq}& - H_0^2\cdot( 1 -  H(X_2 \oplus X_3| Y_2^3) )  \\ 
&\leq& -  H_0^2\cdot ( 1 -  H_0 - H_0\cdot(1- H_0) ) \\ 
&=&  -  H_0^2 \cdot( 1 -  H_0 )^2   \leq 0 
\end{IEEEeqnarray*} 
where $(a)$ follows similarly from \eqref{eq-alpha_alpha} and \eqref{eq-lowerbound_h2_a_star_a} by leaving $X_2+X_3$ grouped. 
 
 \section{Error exponents: Proof of Theorem \ref{theo-error-exponents}} \label{app_ProofTheo2}
In order to prove that the error exponent of a mutli-kernel polar code is given by 
 \begin{equation}
 E = \sum_{j=1}^s \frac{p_{l_j} \log_2(l_j)}{\sum_{j^\prime } p_{l_{j^\prime}} \log_2(l_{j^\prime})} \cdot E_{l_j}
\end{equation}
we would need to prove that $E$ satisfies both conditions in definition \ref{Def-Error-Exp}, namely 
\begin{itemize}
\item Condition 1):  For all $\gamma \geq E$,  
\begin{equation}
 \lim_{m \to \infty} \mathds{P}( Z_m \geq 2^{-N^\gamma }) = 1;
\end{equation}
\item Condition 2) for all $0 <\gamma \leq E$
\begin{equation}
 \lim_{m \to \infty} \mathds{P}( Z_m \leq 2^{-N^\gamma }) = I(W).
\end{equation}
\end{itemize}

Before proving that both these conditions hold, we list some preliminaries which are essential to the proof. 

\subsection{Preliminaries}
Let $m$ be the current iteration in the multi-kernel construction given by $G_N = \bigotimes_{k=1}^m T_{l_k}$ where the kernels $ T_{l_k}$ can take values in a set of distinct kernels $\{T_1, \dots, T_s\}$ with a probability $p_j$ for $j \in[1:s]$. 
Let $b_1,\dots, b_m $ the realization of the random variables $B_1, \dots, B_m$ up to this iteration. We will use the shorthand notation $Z_m$ to denote the realization of the random Bhattacharryaa parameter $Z(W_{b_1,\dots, b_m})$ defined in Definition \ref{Def-Bhattacharyya}. Let $D_{b_1}, \dots,D_{b_m} $ denote the corresponding partial distances and $N = \prod_{k=1}^m l_j$ be the blocklength at iteration $m$.

\begin{definition}
We define the occurrences set of a kernel $T_{l_j}$ in $G_N$ as 
\begin{IEEEeqnarray}{rCl}
\mathcal{N}_j^m &\triangleq& \{ k \in [1:m], T_k = T_{l_j} \}
\end{IEEEeqnarray} 
and the occurrences set of an index $i \in [1:l_j]$ in the kernel $T_{l_j}$ as
\begin{IEEEeqnarray}{rCl}
\mathcal{N}_{i,j}^m &\triangleq& \{ k \in \mathcal{N}_j^m , b_k = b_i \}, 
\end{IEEEeqnarray} 
with cardinalities
\begin{IEEEeqnarray}{rCl}
n_j^m &\triangleq& ||\mathcal{N}_j^m|| \\
n_{i,j}^m &\triangleq& ||\mathcal{N}_{i,j}^m|| . 
\end{IEEEeqnarray} 
\end{definition}

By the law of large numbers, and since the variables $(B_1, \dots, B_m)$ are each uniformly distributed, the cardinalities of these sets verify
\begin{IEEEeqnarray}{rCl}
&& \lim_{m \to \infty}\dfrac{  n_j^m }{m}  =  p_j \ , \ \lim_{m \to \infty}\dfrac{ n_{i,j}^m}{n_j^m }  =  \dfrac {1}{l_j} \\
  && \qquad \qquad \Rightarrow \lim_{m \to \infty}\dfrac{ n_{i,j}^m}{m}  =  \dfrac { p_j }{l_j}  . 
\end{IEEEeqnarray}
 As such, 
 \begin{equation}\label{Inequality_limit}
 \forall  \epsilon >0,  \exists M \geq 0 , \ \text{s.t.} \  \left| \dfrac{ n_{i,j}^m}{m} - \dfrac { p_j }{l_j}   \right| \leq \epsilon  \ \text{ for all} \  m \geq M . 
\end{equation}

In the following, we list some key properties verified by the sequence of Bhattacharyya parameters $(Z_m)_{m\geq0}$.
\begin{lemma}\label{Lemma-Inequality-Z}
There exists $K\geq 1$, such that, for all $m\geq 1$, 
 \begin{IEEEeqnarray}{rCl}\label{Inequality-Z}
   Z_{m-1}^{D_{b_m} } &\leq& Z_m \leq  K Z_{m-1}^{D_{b_m}} . 
\end{IEEEeqnarray}
\end{lemma}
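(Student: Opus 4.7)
The plan is to derive the inequality by applying \cite[Lemma 13]{exp_urbanke} one recursion step at a time and then uniformizing the resulting multiplicative constant over the finite collection of kernels used in the construction. The cited lemma already provides, for any single kernel $T_l$, precisely the desired per-step inequality on the Bhattacharyya parameter. The key observation is that in the Kronecker product $G_N = T_{l_1} \otimes \dots \otimes T_{l_m}$, the transition from $W_{b_1,\dots,b_{m-1}}$ to $W_{b_1,\dots,b_m}$ is mechanically identical to one level of $T_{l_m}$-polar polarization applied to $W_{b_1,\dots,b_{m-1}}$, so the single-kernel lemma applies verbatim at this step and yields
\begin{equation*}
Z_{m-1}^{D_{b_m}} \le Z_m \le 2^{l_m - b_m}\, Z_{m-1}^{D_{b_m}}.
\end{equation*}

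The lower bound is already the one claimed. For the upper bound, I would observe that the prefactor $2^{l_m - b_m}$ depends on the iteration only through $l_m$ and $b_m$. Since the multi-kernel construction draws $T_{l_m}$ from the finite set $\{T_{l_1}, \dots, T_{l_s}\}$, the dimension $l_m$ is uniformly bounded by $l_{\max} \triangleq \max_{j\in [1:s]} l_j$, and $b_m \ge 1$ by definition. Hence $2^{l_m - b_m} \le 2^{l_{\max}-1}$ for every $m$, and the choice $K \triangleq 2^{l_{\max}-1}$ provides a uniform upper bound independent of $m$, as required.

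The only point that deserves care is verifying that the single-kernel bound of \cite{exp_urbanke} indeed transfers without modification to one level of the multi-kernel recursion. This is where the multi-kernel vs.\ single-kernel distinction could in principle cause trouble, but it does not: the outermost kernel applied at step $m$ acts on the sub-channels produced at step $m-1$ in a way that is formally indistinguishable from the single-kernel case, because the Bhattacharyya evolution depends only on the kernel currently being applied and on the input sub-channel, not on the history of previous kernels. Once this decoupling is in place, the proof reduces to the uniformization step above, and no genuine analytical obstacle remains.
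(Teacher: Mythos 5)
Your proof is correct and follows essentially the same route as the paper: both invoke \cite[Lemma 13]{exp_urbanke} for the single-step Bhattacharyya evolution and then uniformize the prefactor over the finite kernel set using $l^\star \triangleq \max_j l_j$. The only cosmetic difference is that you exploit $b_m \ge 1$ to take the marginally tighter constant $K = 2^{l^\star - 1}$, whereas the paper settles for $K = 2^{l^\star}$; this has no effect on the lemma's content.
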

 \begin{proof}
This inequality follows from simple analytic manipulations of the result of (\cite[Lemma 13]{exp_urbanke}), 
\begin{IEEEeqnarray}{rCl}
  Z_{m-1}^{D_{b_m} }  \leq  Z_m  &\leq& 2^{(l_m-b_m)} Z_{m-1}^{D_{b_m}} \\
   &\leq&  2^{l^\star} Z_{m-1}^{D_{b_m}}  , 
\end{IEEEeqnarray}
where 
\begin{equation}
 l^\star \triangleq \max_{j\in [1:s]} l_j . 
\end{equation}
The results follows then by defining $K \triangleq 2^{l^\star}$.
\end{proof}

\begin{lemma}\label{Lemma-partial-distances}
 For any polarizing kernel $T_{l_j}$, the partial distances $D_i$ for $i \in [1:l_j]$, satisfy the following
\begin{enumerate}
\item All partial distances are greater than $1$ and bounded up by $l_j$
$\forall i \in[1:l_j], \quad  1 \leq D_i \leq l_j$ ; 
\item there exists at least a partial distance $D_{i_j}$, of a row $i_j$ of the kernel's matrix $T_{l_j}$, such that $D_{i_j} \geq 2$ .
\end{enumerate}
\begin{proof}
The first claim follows from the invertibility of a polarizing kernel matrix and the definition of partial distances.
Concerning the second claim, if all partial distances $D_i$ are equal to $1$ for a Kernel $T_l$, \eqref{Inequality-Z} becomes 
\begin{equation}
 Z_{m-1} \leq Z_m  , \text{ for all } m \geq 1 . 
\end{equation}
Thus, $(Z_m)_m$ is a non-decreasing positive sequence, which can converge to $0$ only if it's constant and equal to $0$, contradicting the polarization property. 
\end{proof} 
 \end{lemma}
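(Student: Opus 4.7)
The plan is to treat the two claims separately, the first being essentially a matter of unpacking the definition of partial distance, and the second requiring the Bhattacharyya recursion from Lemma \ref{Lemma-Inequality-Z}.

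For claim 1, I would first dispose of the upper bound. Since $D_i$ is a Hamming distance between a binary vector of length $l_j$ and a linear subspace that contains the all-zero vector, one has $D_i \leq \mathrm{wt}(\mathbf{t}_i) \leq l_j$. For the lower bound $D_i \geq 1$, the key observation is that any polarizing kernel $T_{l_j}$ must be invertible: information conservation \eqref{eq-InformationConservation} and the martingale argument of Lemma \ref{lem-mart} both rely on $G_N$ being a bijection over the binary inputs, and this forces each factor $T_{l_j}$ to be invertible as well. Invertibility of $T_{l_j}$ makes its rows linearly independent, hence $\mathbf{t}_i \notin \langle \mathbf{t}_{i+1}, \dots, \mathbf{t}_{l_j}\rangle$, so the distance to this subspace is nonzero and, being integer-valued, is at least $1$.

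For claim 2, I would argue by contradiction along the lines of the sketch already in the lemma statement. Assume that every partial distance of $T_{l_j}$ equals $1$. Substituting $D_{b_m} = 1$ into the left inequality of Lemma \ref{Lemma-Inequality-Z} gives $Z_m \geq Z_{m-1}$ on every branch of the polarization tree that uses $T_{l_j}$. Considering a pure Kronecker power of $T_{l_j}$, the resulting sequence $(Z_m)_m$ is therefore almost surely non-decreasing and bounded above by $1$, so it converges almost surely to some limit $Z_\infty \geq Z_0$. On the other hand, polarization of $T_{l_j}$ (Definition \ref{Def-pol}, combined with the standard inequalities relating $Z(W)$ and $I(W)$ for binary-input channels, e.g.\ $Z(W) \leq \sqrt{1 - I(W)^2}$) requires $Z_m \to 0$ on an event of strictly positive probability $I(W)$ for every non-degenerate input channel, which is incompatible with $Z_\infty \geq Z_0 > 0$. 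This contradiction forces at least one partial distance to exceed $1$, and by integrality some $D_{i_j} \geq 2$.

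The main subtlety is not computational but conceptual: one must cleanly bridge the $I_m$-formulation of polarization used in Definition \ref{Def-pol} and the $Z_m$-based recursion of Lemma \ref{Lemma-Inequality-Z} in order to close the contradiction, and one must ensure no circularity with the error-exponent analysis that ultimately relies on this very lemma. Both bridges, however, rely only on classical binary-input channel inequalities and on the invertibility of $G_N$, so nothing new is needed beyond carefully invoking what the paper already assumes about the kernels.
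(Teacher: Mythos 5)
Your proposal is correct and follows essentially the same route as the paper: claim 1 via invertibility of the kernel (you merely fill in the details the paper leaves implicit, namely $D_i \leq \mathrm{wt}(\mathbf{t}_i) \leq l_j$ and $\mathbf{t}_i \notin \langle \mathbf{t}_{i+1},\dots,\mathbf{t}_{l_j}\rangle$), and claim 2 via the same contradiction from the left-hand side of Lemma \ref{Lemma-Inequality-Z}, that all partial distances equal to $1$ would force $(Z_m)_m$ to be non-decreasing and hence unable to reach $0$. Your added care in bridging the $I_m$-based Definition \ref{Def-pol} to the $Z_m$ statement via the standard $Z$--$I$ inequalities is a legitimate tightening of a step the paper glosses over, but it does not change the argument.
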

 The following Lemma will be instrumental in the second part of the proof. 
\begin{lemma}\label{Lemma-Inequality-m1-m2}
For all $m_1$ and $m_2$ such that $m_2 \geq m_1$, we have that 
\begin{equation}\label{Inequality-m1-m2}
   Z_{m_2} \leq \left(K^{(m_2-m_1)} Z_{m_1} \right)^{\displaystyle\prod_{k=m_1+1}^{m_2} D_{b_k}} .
\end{equation}
\end{lemma}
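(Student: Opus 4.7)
The plan is to prove the inequality by induction on the gap $n = m_2 - m_1 \geq 0$, using the one-step bound $Z_m \leq K Z_{m-1}^{D_{b_m}}$ from Lemma \ref{Lemma-Inequality-Z} as the engine, and the fact that all partial distances $D_{b_k}$ are at least $1$ (Lemma \ref{Lemma-partial-distances}) together with $K = 2^{l^\star} \geq 1$ to absorb the extra factor of $K$ that appears at each step.

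For the base case $n = 0$, the empty product equals $1$, so the right-hand side reduces to $Z_{m_1}$ and the inequality is trivial. For the base case $n = 1$, the claim is exactly the one-step inequality in Lemma \ref{Lemma-Inequality-Z}. For the inductive step, I assume the inequality holds up to $m_2 - 1$ and apply Lemma \ref{Lemma-Inequality-Z} once more to obtain
\begin{equation*}
Z_{m_2} \;\leq\; K\, Z_{m_2-1}^{D_{b_{m_2}}} \;\leq\; K\left(K^{m_2-1-m_1} Z_{m_1}\right)^{D_{b_{m_2}} \prod_{k=m_1+1}^{m_2-1} D_{b_k}} \;=\; K\left(K^{m_2-1-m_1} Z_{m_1}\right)^{\prod_{k=m_1+1}^{m_2} D_{b_k}}.
\end{equation*}

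To finish I need to absorb the leading factor $K$ into the exponentiated term so as to reach $\bigl(K^{m_2-m_1} Z_{m_1}\bigr)^{\prod_{k=m_1+1}^{m_2} D_{b_k}}$. This amounts to showing $K \leq K^{\prod_{k=m_1+1}^{m_2} D_{b_k}}$, which follows immediately since $K \geq 1$ and the product of partial distances is at least $1$ (each factor $D_{b_k} \geq 1$ by Lemma \ref{Lemma-partial-distances}). Completing the manipulation yields exactly the desired bound. There is no real obstacle here: the argument is a straightforward telescoping of the one-step bound, and the only point requiring care is the bookkeeping of the $K$-factors, which is handled by the monotonicity $x \mapsto x^{\prod D_{b_k}}$ for $x \geq 1$.
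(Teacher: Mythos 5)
Your proof is correct and follows essentially the same route as the paper's: both telescope the one-step bound $Z_m \leq K Z_{m-1}^{D_{b_m}}$ of Lemma \ref{Lemma-Inequality-Z} and absorb the accumulated powers of $K$ using $K \geq 1$ and $D_{b_k} \geq 1$. The only difference is presentational --- you package the iteration as an induction on $m_2 - m_1$, whereas the paper unrolls it explicitly and bounds the resulting sum of partial products of the $D_{b_k}$ by $(m_2-m_1)\prod_{k=m_1+1}^{m_2} D_{b_k}$.
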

\begin{proof}
Let $m_2 \geq m_1 \geq 0$ be two integers. We have from the right  hand side of inequality \eqref{Inequality-Z} that 
  \begin{IEEEeqnarray}{rCl}
  Z_{m_2} &\leq& K Z_{m_2 -1}^{D_{b_{m_2}}} \\
  & \leq& K  K^{D_{{b_{m_2 }}}} Z_{m_2 -2}^{D_{b_{m_2}} D_{b_{m_2-1}}} \\
  &\leq& K^{  \left[1 + \displaystyle\sum_{m= m_1+2}^{m_2} \prod_{k=m}^{m_2} D_{b_k} \right]} \times  Z_{m_1}^{\displaystyle\prod_{k=m_1+1}^{m_2} D_{b_k}} \\
  &\overset{(a)}{\leq}& K^{  \left[\displaystyle\sum_{m=m_1+1}^{m_2} \prod_{k=m}^{m_2} D_{b_k}\right]} \times Z_{m_1}^{\displaystyle\prod_{k=m_1+1}^{m_2} D_{b_i}} \\
  &\overset{(b)}{ \leq}& K^{ (m_2 -m_1) \displaystyle\prod_{k=m_1+1}^{m_2} D_{b_k}} \times Z_{m_1}^{\displaystyle\prod_{k=m_1+1}^{m_2} D_{b_k}} \\
  &=& \left(K^{(m_2 -m_1)} Z_{m_1} \right)^{\displaystyle\prod_{k=m_1+1}^{m_2} D_{b_k}} , 
\end{IEEEeqnarray}
where $(a)$ and $(b)$ result from both that $K\geq 1$, and from claim 2) Lemma \ref{Lemma-partial-distances}.
\end{proof}
 
\subsection{Proof of condition 1) }\label{proof-part-1}
To show condition 1), we rely on the left hand side of \eqref{Inequality-Z}. 
Given an $ \epsilon \geq 0$ with $m \geq M$ satisfying \eqref{Inequality_limit}, we have that 
\begin{IEEEeqnarray}{rCl}
  Z_m  &\geq  Z_{m-1}^{D_{b_m} } & \geq Z_0^{\displaystyle\prod_{k=1}^m D_{b_m} } . 
\end{IEEEeqnarray}
The exponent of $Z_0$ can be upper bounded as 
 \begin{IEEEeqnarray}{rCl}
\prod_{k=1}^m D_{b_m} &=& \prod_{j=1}^s \prod_{k \in \mathcal{N}_j^m} D_{b_k} \\
&=& \prod_{j=1}^s \prod_{i = 1}^{l_j}  D_{b_i}^{n_{i,j}^m}  \\
&=& \prod_{j=1}^s l_j^{ \left[ \displaystyle\sum_{i = 1}^{l_j} n_{i,j}^m \log_{l_j} \left( D_{b_i} \right) \right] }   \\
&=&\prod_{j=1}^s l_j^{  m \left[   \displaystyle\sum_{i = 1}^{l_j} \dfrac{ n_{i,j}^m}{m} \log_{l_j} \left( D_{b_i }\right) \right] }  \\  
&\overset{(a)}{\leq}&\prod_{j=1}^s l_j^{  m ( p_j + \epsilon^\prime)  \left[  \dfrac{1}{l_j} \displaystyle\sum_{i = 1}^{l_j}  \log_{l_j} \left( D_{b_i }\right) \right] }   \\
&\leq& \prod_{j=1}^s l_j^{  m ( p_j + \epsilon^\prime)  E_{l_j} } \\ 
&\overset{(b)}{\leq}& N^{E_\epsilon } 
\end{IEEEeqnarray}
where $(a)$ stems from \eqref{Inequality_limit}, and $E_{l_j}$ is as defined in \eqref{def-E_j}, and 
\begin{equation}
 \epsilon^\prime \triangleq \epsilon \cdot \displaystyle \max_{j\in[1:s]}l_j  = \epsilon \cdot l^\star ,
\end{equation}
while $(b)$ stems from
\begin{IEEEeqnarray}{rCl}
 N =  \prod_{k=1}^m l_k  =  \prod_{j=1}^s l_j^{n_j^m} \geq  \prod_{j=1}^s l_j^{m (p_j - \epsilon)}\geq  \prod_{j=1}^s l_j^{m (p_j - \epsilon^{\prime})} , 
\end{IEEEeqnarray}
where 
 \begin{IEEEeqnarray}{rCl}
E_\epsilon &\triangleq&  \dfrac{ \sum_{j=1}^s  m ( p_j + \epsilon^{\prime}  )  \log(l_j)  E_{l_j}}{     \sum_{j^\prime =1}^s  m   (p_{j^\prime} - \epsilon^{\prime} ) \log(l_{j^\prime} ) } \\
&=&  \sum_{j=1}^s \dfrac{ ( p_j +  \epsilon^{\prime}) \log(l_j) }{ \sum_{j^ \prime =1}^s  ( p_{j^\prime} -  \epsilon^{\prime})   \log(l_{j^\prime} ) } E_{l_j} \\
&\leq& E + \epsilon^{\dprime} , 
\end{IEEEeqnarray}
with $\epsilon^{\dprime}$ proportional to $\epsilon$. 
Thus, we have that, for $m \geq M$, 
\begin{IEEEeqnarray}{rCl}\label{Eq-FromZto2}
  Z_m 
  &\geq& Z_0^{ N^{E+\epsilon^{\dprime  }  }} \\
  &=& 2^{[N^{E+\epsilon^{\dprime }} \cdot \log_2(Z_0)]} \\
  &=& 2^{ [ - N^{E+\epsilon^{\dprime}} \cdot  N^{\log_N(- \log_2(Z_0))} ]}\\
  &=& 2^{[ - N^{E+\epsilon^{\dprime} + \log_N(- \log_2(Z_0))} ]}\\
  &=& 2^{[ - N^{E+\epsilon^{\trprime } } ]} . 
\end{IEEEeqnarray}
Finally, for all $\gamma \geq E$, 
\begin{equation}
 \lim_{m \to \infty} \mathds{P}( Z_m \geq 2^{-N^\gamma }) = 1 . 
\end{equation}

\subsection{Proof of condition 2)}\label{proof-part-2}
 To prove condition 2), we will rely on the right hand side of \eqref{Inequality-Z} by showing that the exponential decay of $Z_m$ annihilates the role of the multiplicative constant $K$. 
\begin{lemma}\label{Lemma_bound-1}
For every $\epsilon >0$, there exists $M_0$ such that 
\begin{equation}
 \mathds{P} \left( Z_m \leq K^{-(l^\star +1)} \quad \forall m \geq M_0\right) > I_0 -\epsilon  , 
\end{equation}
where $l^\star \triangleq \displaystyle\max_{j\in[1:s]} l_j$.  
\end{lemma}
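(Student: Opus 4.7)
The plan is to derive the lemma by coupling the behavior of the Bhattacharyya process $\{Z_m\}$ to that of the mutual information process $\{I_m\}$ already analyzed in Lemma \ref{lem-mart} and Theorem \ref{th-inequality}, and then extract the desired uniform-in-$m$ statement via continuity of measure.

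First, I would upgrade the convergence of $\{I_m\}$ from convergence in probability, as stated in Lemma \ref{lem-mart}, to almost sure convergence. Since $\{I_m\}$ is a bounded martingale, Doob's martingale convergence theorem yields $I_m \to I_\infty$ almost surely, and by Theorem \ref{th-inequality} the limit $I_\infty$ is Bernoulli with $\mathds{P}(I_\infty = 1) = I_0$. Next, I would invoke the standard capacity--Bhattacharyya inequality for binary-input channels, namely $I(W) \leq \sqrt{1 - Z(W)^2}$, which translates pointwise into $Z_m^2 \leq 1 - I_m^2$. On the event $\{I_\infty = 1\}$, which has probability $I_0$, the right-hand side tends to $0$ almost surely, so $Z_m \to 0$ a.s. on this event.

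At this point, for almost every $\omega \in \{I_\infty = 1\}$ there exists some (random) index $M_0(\omega)$ such that $Z_m(\omega) \leq K^{-(l^\star+1)}$ for all $m \geq M_0(\omega)$. To obtain a deterministic $M_0$ that works with the claimed probability, I would define the nested increasing family of events
\[
A_n \triangleq \bigl\{\omega : Z_m(\omega) \leq K^{-(l^\star+1)} \text{ for all } m \geq n\bigr\},
\]
and observe that $\bigcup_{n\geq 1} A_n$ contains $\{I_\infty = 1\}$ up to a null set, since $Z_m \to 0$ almost surely on $\{I_\infty = 1\}$. By continuity of probability from below, $\mathds{P}(A_n) \uparrow \mathds{P}\bigl(\bigcup_n A_n\bigr) \geq I_0$. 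Hence, given any $\epsilon > 0$, one may pick $M_0$ large enough to guarantee $\mathds{P}(A_{M_0}) > I_0 - \epsilon$, which is precisely the claim.

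The main obstacle is essentially bookkeeping: the statement of Lemma \ref{lem-mart} only asserts convergence in probability, whereas we need almost sure convergence in order to produce a single (random) threshold index $M_0$ valid \emph{simultaneously for all} $m \geq M_0$. This upgrade is painless for bounded martingales, and the remainder of the argument reduces to a standard capacity--Bhattacharyya comparison combined with continuity of measure. The specific multi-kernel structure (the constants $K$, $l^\star$, partial distances) plays no role in this lemma beyond fixing the threshold value; the structural information from Lemma \ref{Lemma-Inequality-Z} will only be needed in the subsequent arguments that leverage this lemma.
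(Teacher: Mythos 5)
Your proof is correct, and its final step coincides with the paper's: both arguments hinge on writing the event $\{Z_m \to 0\}$ as a countable union of the nested, increasing events $A_n = \{Z_m \leq K^{-(l^\star+1)} \ \forall m \geq n\}$ and invoking continuity of measure from below to extract a deterministic $M_0$ with $\mathds{P}(A_{M_0}) > I_0 - \epsilon$. Where you genuinely diverge is in how you justify that $\mathds{P}(Z_m \to 0) \geq I_0$ in the first place. The paper simply asserts that ``the sequence $(Z_m)_{m>0}$ converges to a Bern$(1-I_0)$ random variable'' as a consequence of polarization, even though Theorem~\ref{th-inequality} only establishes the Bernoulli limit for the mutual-information process $\{I_m\}$, not for the Bhattacharyya process; the transfer from one to the other is left implicit. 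You make this transfer explicit: Doob's theorem upgrades the martingale convergence of $\{I_m\}$ to almost-sure convergence, and the capacity--Bhattacharyya inequality $I(W) \leq \sqrt{1-Z(W)^2}$, i.e.\ $Z_m \leq \sqrt{1-I_m^2}$, forces $Z_m \to 0$ a.s.\ on $\{I_\infty = 1\}$, an event of probability $I_0$. Your route is therefore slightly longer but more self-contained, as it closes a small logical gap in the paper's own argument; the paper's version is shorter but leans on an unproved (though standard) equivalence between the two polarization statements.
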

\begin{proof}
The proof of this Lemma is a generalization of the proof of \cite[Lemma 5.11]{pol_sas}. 
Defining the event 
\begin{IEEEeqnarray}{rCl}
\Omega &\triangleq& \{ w : \lim_{m \to \infty} Z_m(w) = 0 \} \\
&=& \{ w : \forall k > 1, \exists m_0 \ \text{s.t} \     Z_m(w) \leq \dfrac{1}{k} \ \forall m\geq m_0     \} \\
&=& \bigcap_{k > 1} \bigcup_{m_0\geq 0}  A_{m_0,k} , 
\end{IEEEeqnarray}
where 
\begin{equation}
 A_{m_0,k} \triangleq \{ w : Z_m(w) \leq \frac{1}{k} \quad \forall m \geq m_0 \}.  
\end{equation}
 Since the multi-kernel polar code defined by $G_N$ polarizes, i.e. the sequence $(Z_m)_{m > 0}$ converges to a Bern($1- I_0$) random variable, we have that 
\begin{equation}
 \mathds{P} (\Omega) =  I_0. 
\end{equation}
In the following, the notation $Z_m(w)$ will be shortened to $Z_m$ since the dependence on $w$ is implicit. Let $k \geq 1$ be fixed. The sequence of sets $ \left(  A_{m_0,k} \right)_{m_0\geq 0} $ is increasing in $m_0$, thus we can write 
\begin{equation}
  A_{m_0,k} = \bigcup_{ n\leq m_0} A_{n,k} . 
\end{equation}
If we define
\begin{equation}
  A_{\infty,k} \triangleq \bigcup_{ m_0\geq 0} A_{m_0,k} , 
\end{equation}
we can write that 
\begin{IEEEeqnarray}{rCl}
\lim_{m_0\to \infty } \mathds{P} ( A_{m_0,k}) &=& \lim_{m_0\to \infty } \mathds{P} \left(  \bigcup_{ n\leq m_0} A_{n,k} \right) \\
&=& \mathds{P} \left( \lim_{m_0\to \infty }  \bigcup_{ n\leq m_0} A_{n,k} \right) \\
&=&  \mathds{P} \left(  A_{\infty,k} \right) . 
\end{IEEEeqnarray}
Thus, for all $k \geq 1$, for all $\epsilon >0$ there exists $M_0$ such that for all $m_0 \geq M_0$
 \begin{equation}
  \mathds{P} ( A_{m_0,k}) \geq \mathds{P} \left(  A_{\infty,k} \right) -  \epsilon . 
 \end{equation}
More specifically, for $k = K^{ l^\star + 1 }$, for all $\epsilon >0$ there exists $M_0$ such that 
\begin{IEEEeqnarray}{rCl}
  \mathds{P} ( A_{M_0,K^{ l^\star + 1 }}) &\geq& \mathds{P} \left(  A_{\infty,K^{ l^\star + 1 }} \right) -  \epsilon \\
  &\geq& \mathds{P} \left( \bigcap_{k\geq 1} A_{\infty,k} \right) -  \epsilon \\
  &=& \mathds{P} \left( \Omega \right) -  \epsilon \\
  &=& I_0 - \epsilon , 
\end{IEEEeqnarray}
which concludes the proof. 
\end{proof}

The following shows a key inequality on the exponential decay of the sequence $(Z_m)_{m >0}$. 
\begin{lemma}\label{Lemma_bound-2}
For every $\epsilon >0$, there exists $M_1$ such that  for all $m \geq M_1$
\begin{equation}
 \mathds{P} \left( Z_m \leq K^{-\frac{m}{4l^\star} }  \right) > I_0 -\epsilon . 
\end{equation} 
\end{lemma}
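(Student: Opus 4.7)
The plan is to bootstrap Lemma \ref{Lemma_bound-1} by iterating the one-step recursion $L_n \geq D_{b_n}L_{n-1}-1$ (writing $L_n := -\log_K Z_n$, so that the target becomes $\mathds{P}(L_m \geq m/(4l^\star)) > I_0 -\epsilon$) and controlling the resulting product via a law of large numbers on the partial-distance sequence.

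First, the proof of Lemma \ref{Lemma_bound-1} extends verbatim to any threshold: taking $k = K^{C}$ instead of $k=K^{l^\star+1}$ in the $A_{\infty,k}$ construction yields, for every $C,\delta>0$, some $M_0=M_0(C,\delta)$ with $\mathds{P}(L_n \geq C \text{ for all } n\geq M_0) \geq I_0 - \delta$. Second, iterating $L_n \geq D_{b_n}L_{n-1}-1$ backwards from $m$ to $M_0$ gives the sharpened inequality
\[
L_m \;\geq\; P_m\bigl(L_{M_0} - S_m\bigr),\quad P_m := \prod_{k=M_0+1}^m D_{b_k},\; S_m := \sum_{j=M_0+1}^m \frac{1}{P_j},\; P_j := \prod_{k=M_0+1}^j D_{b_k}.
\]
This is strictly tighter than Lemma \ref{Lemma-Inequality-m1-m2}, which would replace $S_m$ by the crude $(m-M_0)$; the sharpening is essential, because $L_{M_0}-(m-M_0)$ turns negative as soon as $m - M_0$ exceeds the constant $L_{M_0}$. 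Third, by Lemma \ref{Lemma-partial-distances} each $\log D_{b_k}$ is bounded and has expectation at least $(\log 2)/l^\star > 0$, and the $\log D_{b_k}$'s are independent. Kolmogorov's strong law then gives $\log P_m / m \to \bar d > 0$ almost surely, which both makes $S_\infty = \sum_{j > M_0} 1/P_j$ a.s.\ finite and forces $P_m$ to grow at least like $e^{\bar d m / 2}$ for $m$ large. Turned into a tail statement: for every $\delta > 0$ there exist a constant $C_\delta$ and an event $\mathcal{E}$ with $\mathds{P}(\mathcal{E}) \geq 1 - \delta$ on which $\sup_m S_m \leq C_\delta$ and $P_m \geq e^{\bar d m / 2}$ for all $m \geq M_\delta$.

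Combining the two ingredients with $C := C_{\epsilon/2}+1$ in the strengthened Lemma \ref{Lemma_bound-1} and intersecting with $\mathcal{E}$ at $\delta := \epsilon/2$ yields, by a union bound, an event of probability at least $I_0 - \epsilon$ on which $L_{M_0} - S_m \geq 1$; hence $L_m \geq P_m \geq e^{\bar d m / 2} \geq m/(4l^\star)$ for all $m \geq M_1$ large enough (depending only on $\bar d$ and $l^\star$), which is the desired conclusion.

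The main obstacle is two-fold. First, the sharpened iteration with $S_m$ in place of $m - M_0$ must be derived, because the inequality of Lemma \ref{Lemma-Inequality-m1-m2} as stated is too weak to survive any gap $m-M_0 > L_{M_0}$ and would make the strategy collapse. Second, the LLN has to be upgraded into a quantitative tail bound: the $\log D_{b_k}$'s are independent but not identically distributed (their laws depend on the kernel $T_{l_k}$ used at step $k$), so a Borel--Cantelli or sup-martingale argument is needed to produce one high-probability event on which $S_m$ is uniformly bounded and $P_m$ grows exponentially from some $M_\delta$ onward.
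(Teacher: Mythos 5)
Your route is genuinely different from the paper's. The paper never iterates the full multiplicative recursion at this stage: on the event $A_{M_0,K^{l^\star+1}}$ supplied by Lemma \ref{Lemma_bound-1} it only extracts from each step the \emph{additive} gain $(l^\star+1)(D_{b_{m+1}}-1)-1$ in the exponent of $K$ --- at least $l^\star$ whenever the index with $D\geq 2$ guaranteed by Lemma \ref{Lemma-partial-distances} is hit, at worst $-1$ otherwise --- and then counts the good hits via a typical-set argument (frequency at least $\alpha^\star$) to get the net linear decay $K^{-m/(4l^\star)}$. This is weaker per step than your $L_m \geq P_m(L_{M_0}-S_m)$, but it is exactly what the linear target of the lemma requires, and it has the decisive advantage that the threshold $K^{-(l^\star+1)}$ fed into Lemma \ref{Lemma_bound-1} is a constant determined by the kernels alone. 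Your sharpened iteration is correct (it would in fact yield exponential growth of $L_m$, stronger than needed here), and your diagnosis that Lemma \ref{Lemma-Inequality-m1-m2} as stated cannot survive a gap $m-M_0>L_{M_0}$ is accurate.

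There is, however, a genuine circularity in your choice of constants. You obtain $M_0$ from the strengthened Lemma \ref{Lemma_bound-1} with threshold $C=C_{\epsilon/2}+1$, but $C_{\epsilon/2}$ is a $(1-\epsilon/2)$-quantile of $\sup_m S_m$ with $S_m=\sum_{j>M_0}1/P_j$ and $P_j=\prod_{k=M_0+1}^{j}D_{b_k}$ --- a random variable whose very definition depends on $M_0$. As written, neither constant can be fixed first: the almost-sure finiteness of $S_\infty$ for each \emph{fixed} starting index does not by itself produce a $C_\delta$ valid for the $M_0$ you end up using, and the quantiles of the sums started at different indices are not monotonically related. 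The repair is to prove a tail bound on $S_\infty^{(n_0)}:=\sum_{j>n_0}\prod_{k=n_0+1}^{j}D_{b_k}^{-1}$ that is \emph{uniform in the starting index} $n_0$. This is true and short: by Lemma \ref{Lemma-partial-distances} each kernel has a row with $D\geq 2$, so $\log_2 P_j^{(n_0)}$ stochastically dominates a binomial variable with parameters $(j-n_0,1/l^\star)$, whence
\begin{equation*}
\mathds{E}\bigl[S_\infty^{(n_0)}\bigr]\;\leq\;\sum_{t\geq 1}\Bigl(1-\tfrac{1}{2l^\star}\Bigr)^{t}\;\leq\;2l^\star
\end{equation*}
for every $n_0$, and Markov's inequality gives $C_\delta=2l^\star/\delta$ independently of $M_0$. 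With that uniform bound the order of quantifiers is legitimate and your argument closes; note also that your tail event for $P_m\geq e^{\bar d m/2}$ poses no such problem, since the deterministic index $M_\delta$ there may legitimately be chosen after $M_0$ and absorbed into $M_1$.
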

\begin{proof}
Fix an $\epsilon > 0$. Following the result of the previous lemma, let $M_0 > 0 $ such that 
\begin{equation}
   \mathds{P} ( A_{M_0,K^{ l^\star + 1 }}) \geq I_0 - \epsilon . 
\end{equation}
Besides, from the right-hand side of inequality \eqref{Inequality-Z}, we have also that 
\begin{IEEEeqnarray}{rCl}
   Z_{m+1} &\leq& K  Z_m^{D_{b_{m+1}}} \\
   &=& K  Z_m^{D_{b_{m+1}}-1} Z_m \\
   &\overset{(a)}{\leq}& K K^{-(l^\star+1)(D_{b_{m+1}}-1)} Z_m\\
   &=& K^{1-(l^\star+1)(D_{b_{m+1}}-1) } Z_m , 
\end{IEEEeqnarray}
where $(a)$ follows from
\begin{equation}\label{Equ-K-lstar}
 Z_m \leq K^{-(l^\star+1)} , 
\end{equation}
for every $w \in  A_{M_0,K^{ l^\star + 1 }}$, for all $m \geq M_0$.
Now, given the result of Lemma \ref{Lemma-partial-distances}, we can bound $Z_{m+1}$ as
\begin{equation}
\left\{
\begin{array}{rcll}
Z_{m+1} &\leq&  Z_m  K^{-l^\star} & \text{ if }  b_{m+1} = i_{m+1} \\
      Z_{m+1} &\leq& Z_m  K & \text{ if }  b_{m+1} \neq i_{m+1}    
\end{array} \right.
\end{equation}
where $i_{m+1} $ is the index for which the partial distance satisfies $D_{i_{m+1}}\geq 2$.

Given two integers $m_0,m_1$ such that $m_1 \geq m_0 \geq M_0$,
\begin{IEEEeqnarray}{rCl}
   Z_{m_1} &\leq&  Z_{m_0} K^{(m_1-m_0)(1 - \alpha_{m_0}^{m_1}(1+ l^\star))} , 
\end{IEEEeqnarray}
where $\alpha_{m_0}^{m_1}$ is the fraction of occurrences of the indices $i_j$ between iterations $m_0$ and $m_1$, i.e., 
\begin{equation}
 \alpha_{m_0}^{m_1} \triangleq \dfrac{|| k\in [m_0:m_1], b_k = i_k||}{m_1-m_0} . 
\end{equation}

If we define the typical set $T_{\alpha,m_0}^{m_1}$ as the set of events for which each index $b_j$ occurs at least $\alpha p_j$ fraction of the time between $m_0$ and $m_1$, we have that

\begin{equation}\label{Equ-inequality-alpha-m0m1}
 \alpha_{m_0}^{m_1} \geq \sum_{j=1}^s p_j \alpha = \alpha  
\end{equation}
inside $T_{\alpha,m_0}^{m_1}$. Note that the typical sets $T_{\alpha,m_0}^{m_1}$ are non-increasing in $\alpha$ for a fixed $m_0$ and $m_1$, and increasing in $m_1$ for a fixed $\alpha$ and $m_0$.  
Given
 \begin{equation}
  \alpha^\star  = \dfrac{2 l^\star + 1}{2 (l^\star + 1)} \dfrac{1}{l^\star} ,
 \end{equation}
for $\alpha^\star$ it holds that
 \begin{equation}
  \alpha^\star  \leq  \dfrac{1}{l^\star}   \leq \dfrac{1}{l_j}  \text{ for all } j\in [1:s]  . 
 \end{equation} 
Thus, since the sets $T_{\alpha,m_0}^{m_1}$ are non-increasing in $\alpha$, we can write that
\begin{equation} \label{Equ-limit-Typicalset-alpha}
 \lim_{m_1\to\infty} \mathds{P} (T_{\alpha^\star,m_0}^{m_1}) \geq  \lim_{m_1\to\infty} \mathds{P} (T_{\frac{1}{l^\star},m_0}^{m_1}) . 
 \end{equation} 
We now define the typical set $T_{m_0}^{m_1}$ as the set of events for which each index $b_j \in [1:l_j]$ appears with a probability at least equal to $\frac{p_j}{l_j}$. 

Since every index $b_j$ occurs asymptotically with a probability $\frac{p_j}{l_j}$, then
\begin{equation}
 \lim_{m_1\to\infty} \mathds{P} (T_{m_0}^{m_1}) = 1 , 
\end{equation}
hence, and for all $\epsilon > 0 $ there exists an $M_1 \geq M_0$ such that, for all $m_1 \geq M_1$, 
\begin{equation}\label{Equ-limit-Typicalset}
 \mathds{P} (T_{m_0}^{m_1}) \geq 1 - \epsilon . 
\end{equation}

Next, since $T_{m_0}^{m_1} \subset T_{\frac{1}{l^\star},m_0}^{m_1}$, we have that 
\begin{equation}\label{Equ-limit-Typicalset2}
 \mathds{P} (T_{\frac{1}{l^\star},m_0}^{m_1}) \geq \mathds{P} (T_{m_0}^{m_1}).
\end{equation}
Now, combining \eqref{Equ-limit-Typicalset-alpha} and \eqref{Equ-limit-Typicalset2}, for all $\epsilon > 0 $ there exists an $M_1 \geq M_0$ such that, for all $m_1 \geq M_1$, 
\begin{equation}
 \mathds{P}  (T_{\alpha^\star,m_0}^{m_1}) \geq 1 - \epsilon .
\end{equation}
For such $\epsilon$, let us choose $M_1$ such that $M_1 \geq 2 M_0$, so that, for all $m_1 \geq M_1$, inside the set $T_{\alpha^\star,M_0}^{m_1} \cap A_{M_0,K^{ l^\star + 1 }}$, 
\begin{IEEEeqnarray}{rCl}
   Z_{m_1} &\leq&  Z_{M_0} K^{(m_1-M_0)(1 - \alpha_{M_0}^{m_1}(1+ l^\star))} \\
   &\overset{(a)}{\leq} &Z_{M_0} K^{(m_1-M_0)(1 - \alpha^\star (1+ l^\star))} \\
   &\overset{(b)}{\leq}&  Z_{M_0} K^{ -\frac{ m_1}{4 l^\star}} \\
   &\overset{(c)}{\leq}& K^{ -\frac{ m_1}{4 l^\star}} , 
\end{IEEEeqnarray}
where $(a)$ follows from \eqref{Equ-inequality-alpha-m0m1}, $(b)$ follows from the assumption $ m_1 \geq M_1 \geq 2 M_0$ and the definition of $ \alpha^\star$ as
\begin{equation}
 m_1 - M_0 \geq \dfrac{m_1}{2} \text{ and  } 1 - \alpha^\star(1+ l^\star) = - \dfrac{1}{2l^\star}  ,  
\end{equation}
and $(c)$ stems from $Z_{M_0} \leq 1$. 
The lemma is now proved noting that the probability of the set $T_{\alpha^\star,M_0}^{m_1} \cap A_{M_0,K^{ l^\star + 1 }}$ satisfies  
 \begin{IEEEeqnarray}{rCl}
  \mathds{P} \left( T_{\alpha^\star,M_0}^{m_1} \cap A_{M_0,K^{ l^\star + 1 }} \right) &\geq& (I_0 - \epsilon)(1 - \epsilon) \\
  &\geq& I_0 - 2 \epsilon . 
\end{IEEEeqnarray}
\end{proof}

Now we are ready to prove condition 2) of definition \ref{Def-Error-Exp}. Let $\epsilon >0$ and let $\alpha <\frac{1}{l^\star}$, and let $\gamma <1$ such that
\begin{equation}\label{Equ-definition-gamma}
  \forall j\in[1:s]  \ \ \alpha l_j  \gamma > 1 - \epsilon .
\end{equation}
We define the constant $C$, independent of the blocklength, as 
\begin{equation}
  C \triangleq \prod_{j=1}^s l_j^{ p_j l_j E_{l_j}} . 
\end{equation}
Let $m_3$ be an integer large enough so that, by defining
\begin{IEEEeqnarray}{rCl}
  m_1 &\triangleq& \dfrac{\log(2m_3)8 l^\star}{\alpha \log(C)} \label{Equ-def-m1} \\ 
  m_2 &\triangleq& \left( 1 + \dfrac{1}{8 l^\star} \right)  m_1 \label{Equ-def-m2}  
\end{IEEEeqnarray}
it holds that
\begin{IEEEeqnarray}{rCl}
  m_1 &\geq& \max (M_0, 8 l^\star)  \label{constraint-m1}\\ 
  m_3 - m_2 &\geq& \gamma m_3 \label{constraint-m3m2}  , 
\end{IEEEeqnarray}
while the typical sets $T_{\alpha, m_1}^{m_2}$, $T_{\alpha, m_2}^{m_3}$ verify 
 \begin{IEEEeqnarray}{rCl}
 \mathds{P} \left(  T_{\alpha, m_1}^{m_2} \right) > 1 - \epsilon \text{ and }  \mathds{P} \left(  T_{\alpha, m_2}^{m_3} \right) > 1 - \epsilon \ . 
\end{IEEEeqnarray}

We start the proof by bounding $Z_{m_2}$:  
\begin{IEEEeqnarray}{rCl}
  Z_{m_2} &\overset{(a)}{\leq}&  ( K^{m_2 -m_1} Z_{m_1} )^{\displaystyle\prod_{i=m_1+1}^{m_2} D_{b_i}} \\
  &\overset{(b)}{\leq}& ( K^{m_2 -m_1 - \frac{m_1}{4 l^\star}} )^{\displaystyle\prod_{i=m_1+1}^{m_2} D_{b_i}} \\
  &\overset{(c)}{ =}& ( K^{-  \frac{m_1}{8 l^\star}} )^{\displaystyle\prod_{i=m_1+1}^{m_2} D_{b_i}} \\
  &\overset{(d)}{\leq}&  K^{- \displaystyle\prod_{i=m_1+1}^{m_2} D_{b_i}} \\
  &\overset{(e)}{\leq}&  K^{- \displaystyle\prod_{j=1}^{s} \prod_{m=1}^{l_j}  D_{b_m} ^{\alpha p_j(m_2 - m_1)}}  \\
  &=&  K^{- \displaystyle\prod_{j=1}^{s}  l_j^{E_{l_j} l_j \alpha p_j(m_2 - m_1)}}  \\
  &=&  K^{- C^{ \alpha (m_2 - m_1)}} \\
  &\overset{(f)}{=}& K^{- C^{ \alpha \frac{m_1}{8 l^\star}}} \label{Equ-BoundZ2}, 
\end{IEEEeqnarray}
where $(a)$ stems from Lemma \ref{Lemma-Inequality-m1-m2}, $(b)$ is the result of Lemma \ref{Lemma_bound-2}, $(c)$ results from the definition of $m_2$ in \eqref{Equ-def-m2}, $(d)$ is a consequence of the constraint \eqref{constraint-m1}, $(e)$ stems from the definition of $T_{\alpha, m_1}^{m_2}$, and $(f)$ is a result of \eqref{Equ-def-m2}.
Let us now bound the term $Z_{m_3}$ as 
\begin{IEEEeqnarray}{rCl}
  Z_{m_3} &\leq&  ( K^{m_3 -m_2} Z_{m_2} )^{\displaystyle\prod_{i=m_2+1}^{m_3} D_{b_i}} \\
  & \overset{(a)}{\leq}&  ( K^{m_3 - C^{ \alpha  \frac{m_1}{8}} } )^{\displaystyle\prod_{i=m_2+1}^{m_3} D_{b_i}} \\
  &\overset{(b)}{=}& ( K^{-  \frac{m_3}{2}} )^{\displaystyle\prod_{i=m_2+1}^{m_3} D_{b_i}} \\
  &\overset{(c)}{\leq}&  K^{- \displaystyle\prod_{i= m_2+1}^{m_3} D_{b_i}} \\
  &\overset{(d)}{\leq} &  K^{- \displaystyle\prod_{j=1}^{s} \prod_{m=1}^{l_j}  D_{b_m} ^{\alpha p_j(m_3 - m_2)}}  \\
  &=&  K^{- \displaystyle\prod_{j=1}^{s}  l_j^{E_{l_j} l_j \alpha p_j(m_3 - m_2)}}  \\
   &\overset{(e)}{\leq}&  K^{- \displaystyle\prod_{j=1}^{s}  l_j^{E_{l_j} l_j \alpha p_j \gamma m_3 }}  \\
   &\overset{(f)}{\leq}& K^{- \displaystyle\prod_{j=1}^{s}  l_j^{E_{l_j}p_j (1 - \epsilon) m_3 }}  \\
   &\overset{(g)}{=}& K^{-N_3^{E(1-\epsilon)}} , 
\end{IEEEeqnarray}
where $(a)$ is a consequence of \eqref{Equ-BoundZ2}, $(b)$ stems from the definition of $m_1$ in \eqref{Equ-def-m1}, $(c)$ results from the not-limiting assumption $m_3 > 2$, $(d)$ follows from the definition of $T_{\alpha, m_2}^{m_3}$, $(e)$ follows from the assumption in \eqref{constraint-m3m2}, $(f)$ is a consequence of \eqref{Equ-definition-gamma}, while $(g)$ follows by noting that the block length at iteration $m_3$ is $N_3 = \prod_{j=1}^{m_3} l_j$. 
The proof is concluded using similar calculations as in \eqref{Eq-FromZto2}, and noticing that
 \begin{equation}
   \mathds{P} \left( \{w: Z_m \leq K^{-\frac{m}{4l^\star}}\}\cap T_{\alpha, m_1}^{m_2}\cap T_{\alpha, m_2}^{m_3}\right) \geq I_0 - \epsilon. 
  \end{equation} 
 
\bibliographystyle{IEEEtran}
\bibliography{polar_codes_bib}
 
\end{document}